\documentclass[utf8,latin1]{article}
\usepackage[utf8]{inputenc}
\usepackage{amsfonts,amsmath,amsthm,amssymb,bm,wasysym,comment}
\usepackage{enumerate}

\newtheorem{theorem}{Theorem}
\newtheorem{remark}[theorem]{Remark}
\newtheorem{lemma}[theorem]{Lemma}
\newtheorem{corollary}[theorem]{Corollary}

\newtheorem{example}[theorem]{Example}
\newtheorem{notation}[theorem]{Notation}

\newtheorem*{mrpc}{MinRank Problem}

\newtheorem*{mdd}{Minimum Rank-Distance Decoding}

\usepackage[autostyle=true, italian=guillemets] {csquotes}
\usepackage[english]{babel}

\usepackage{graphics,standalone,tikz,pgfplots,textcomp,tikz}
\usepackage{geometry}
\geometry{a4paper, top=3cm, bottom=3cm, left=3cm, right=4cm, marginparwidth=3.5cm}

\DeclareMathOperator{\Syz}{Syz}
\DeclareMathOperator{\grade}{grade}

\DeclareMathOperator{\rk}{rk}

\newcommand{\FF}{\mathbb{F}}
\newcommand{\KK}{\mathbb{K}}
\newcommand{\ZZ}{\mathbb{Z}}

\newcommand{\abs}[1]{\ensuremath{\left\lvert #1 \right\rvert}}

\renewcommand{\vec}[1]{\ensuremath{ \mathbf #1 }}

\usepackage{hyperref} 
\hypersetup{pdfpagemode=UseNone,colorlinks=true,linkcolor=blue,citecolor=blue,urlcolor=blue,allbordercolors=white,pdftitle={Reg LPP-ideals},pdfauthor={}}

\definecolor{bluegray}{rgb}{0.4, 0.6, 0.8}

\title{The complexity of the SupportMinors Modeling for the MinRank Problem}

\author{Daniel Cabarcas\footnote{The first author was funded by a CIMPA-ICTP Fellowships ``Research in Pairs".} \and Giulia Gaggero \and Elisa Gorla\thanks{The second and third author were funded by by armasuisse through grant no. CYD-C-2020010.}}
\date{}

\begin{document}

\maketitle

\begin{abstract} 
In this note, we provide proven estimates for the complexity of the SupportMinors Modeling, mostly confirming the heuristic complexity estimates contained in the original article.
\end{abstract}

\section{Introduction}

The MinRank Problem arises naturally within cryptography and coding theory, as well as in numerous other applications. The problem in its general form can be stated as follows.

\begin{mrpc}
Let $\KK$ be a field and let $m,n,k$ be positive integers. Given as input $K$ matrices $M_1,\dots,M_K\in\KK^{n\times m}$, find $x_1,\dots,x_K\in\KK$ such that the matrix $\sum_{\ell=1}^kx_\ell M_\ell$ is nonzero and has least possible rank. 
\end{mrpc}

In situations when the least possible rank (or a tight upper bound for it) is known, one may rephrase the problem as follows.

\begin{mrpc}
Let $\KK$ be a field and let $m,n,r,k$ be positive integers. Given as input $K$ matrices $M_1,\dots,M_K\in\KK^{n\times m}$, find $x_1,\dots,x_K\in\KK$ such that 
\[
0<\rk\left(\sum_{\ell=1}^K x_\ell M_\ell\right)\leq r.
\] 	 
\end{mrpc}

The MinRank Problem plays a central role within multivariate cryptography, both in the cryptanalysis and in the design of schemes, as it is NP-complete~\cite{BFS96} and believed to be quantum-resistant. For example, it is a central tool in the cryptanalysis of HFE and its variants~\cite{KS99, BFP13, CSV17, VS17, DPPS18}, the TTM Cryptosystem~\cite{GC00}, and the ABC Cryptosystem~\cite{MSP14, MPS17}. More recently, both GeMSS and Rainbow were subject to attacks which exploit the MinRank Problem, see~\cite{B21a,TPD21,BBCPSV21,B22}. 

On the constructive side, a zero-knowledge protocol based on the MinRank Problem was proposed by Courtois in~\cite{C01}. This produces a signature scheme following~\cite{FS87}. Recently, a scheme relying on the MinRank Problem for its security was proposed in~\cite{RLT21} and cracked in~\cite{BTV21}. Digital signature relying on the MinRank Problem for their security were submitted to the NIST Call for Additional Digital Signature Schemes in 2023~\cite{ARVBBESZ23-MiRitH,ABCFGNR23-MIRA}, see also~\cite{BESJ22}. 

In addition, the MinRank Problem is closely related to Minimum Distance Decoding in the rank-metric. Notice in fact that, if $\mathcal{C}\subseteq\KK^{n\times m}$ is a linear rank-metric code with basis $M_1,\dots,M_K\in\KK^{n\times m}$ and minimum distance $d(\mathcal{C})$, then Minimum Distance Decoding in the rank-metric can be phrased as follows.

\begin{mdd}
Given a received matrix $M_0\in\KK^{n\times m}$ and a basis $M_1,\dots,M_K\in\KK^{n\times m}$ of the code $\mathcal{C}$, find $x_1,\dots,x_K\in\KK$ such that
\[
\rk\left(\sum_{\ell=1}^K x_\ell M_\ell-M_0\right)\leq\frac{d(\mathcal{C})-1}{2}.
\] 	
\end{mdd}

One sees immediately that Minimum Rank-Distance Decoding is an instance of the MinRank Problem in its second formulation. 
Therefore, estimates on the complexity of the MinRank Problem have a direct impact on understanding the complexity of decoding a general code with respect to the rank-metric and hence on complexity estimates in rank-metric code-based cryptography. Similarly to the MinRank Problem, Minimum Rank-Distance Decoding is known to be NP-hard~\cite{C01} and believed to be quantum-resistant.

Rank-metric code-based cryptography may be traced back to~\cite{GPT91}, where Gabidulin codes were used. The weakness of this proposal and of some related attempts were later exposed in~\cite{O08, FL05, BL04, L10}. Rank-metric code-based cryptography has become relevant again in recent years, when several new cryptographic schemes based on Minimum Rank-Distance Decoding were proposed in the context of the NIST Post-Quantum Cryptography Standardization process, see~\cite{ABDGHRTZ17-LAKE, ABDGHRTZ17-LOCKER, MABBBDGHHZB17-OUROBOROS, MABBBBD19-ROLLO, MABBBDGZ17-RQC, GRSZ14-RANKSIGN, ABGHZ19-DURANDAL}. While none of these proposals was selected, NIST expressed interest in further study of rank-based cryptosystems. In addition, two digital signatures~\cite{CNPRRST23-MEDS, ABBCDFGJRT23-RYDE} which base their security on Minimum Rank-Distance Decoding were submitted this year to the NIST Call for Additional Digital Signature Schemes.

Main modelings for the MinRank Problem are the Kipnis-Shamir~\cite{KS99} and the Minors Modeling~\cite{FLP08, FFS10, FFS13, CG21}. In the past few years, variations of the Minors Modeling such as the MaxMinors Modeling~\cite{BBBGNRT20} and the SupportMinors Modeling~\cite{BBCGPSTV20,BB22} were introduced. Their advantage is that experimentally they appear to have significantly lower complexity compared to the classical Minors Modeling. However, their complexity is significantly less well-understood and current estimates heavily rely on heuristic assumptions.

This work aims at making the heuristics of~\cite{BBCGPSTV20} rigorous, therefore establishing a rigorous upper bound on the complexity of MinRank. In Section~\ref{sect:prelim}, we fix the notation and review some preliminaries that will be used throughout the paper. In Section~\ref{sect:generic} we study the algebraic situation in which the entries of the matrix are distinct variables. The results obtained in Section~\ref{sect:generic} are used in Section~\ref{sect:spec}, where we treat the general case. Our main result is Theorem~\ref{thm:main}, where we compute the dimension of the vector space of relations among the rows of the Macaulay matrix used in the SupportMinors Algorithm. Using the results from the previous sections, in Section~\ref{sect:estimates} we provide rigorous estimates for the complexity of the SupportMinors Modeling. Our findings mostly confirm the heuristic estimates from~\cite{BBCGPSTV20}.

\section{Preliminaries and notation}\label{sect:prelim}

Let $\mathbb{K}$ be a field and let $r,m,n,K$ be positive integers, $r\leq n$. 
For a positive integer $t$, denote by $[t]$ the set $\{1,\ldots,t\}$.
For a square matrix $M$, denote by $|M|$ the determinant of $M$. 
Throughout the paper, we work in the polynomial rings $$R=\KK[y_{kj},c_{ij}: k\in[m], i\in[r], j\in[n]] \mbox{ and } P=\KK[x_{\ell},c_{ij}: \ell\in[K], i\in[r], j\in[n]],$$ where $x_1,\ldots,x_K$, $y_{11},\ldots,y_{mn}$, and $c_{11},\ldots,c_{rn}$ are distinct variables.

\paragraph{SupportMinors}
Consider an instance of the MinRank Problem $M_1,\ldots,M_K$ with target rank $r\geq 1$ and
let $M_\vec{x}=\sum_{\ell=1}^K x_\ell M_\ell$. Let $C=(c_{ij})$ be an $r\times n$ matrix whose entries are distinct variables. SupportMinors addresses the MinRank Problem by solving the polynomial system in $x_\ell$ and $c_{ij}$ consisting of the equations

\begin{equation}\label{eqn:sm}
\left\{
\left|\begin{pmatrix}
	r_i \\
	C
\end{pmatrix}\right|_{*,J}:
J\subseteq [n], \abs{J} = r+1, r_i \text{ $i$-th row of } M_\vec{x}, i\in[r]
\right\}.
\end{equation}
Here we denote by $|M|_{*,J}$ the maximal minor of $M$ corresponding to the columns indexed by $J$. 
Notice that each entry of $M_\vec{x}$ is a linear form $m_{kj}(\vec{x}) = \sum_{\ell=1}^K m^\ell_{k j}x_\ell$, where $m^\ell_{k j}$ is the entry of $M_\ell$ in position $(k,j)$.

\bigskip

In order to study system (\ref{eqn:sm}), in Section \ref{sect:generic} we analyze the simpler situation when the matrix $M_\vec{x}=(m_{kj}(\vec{x}))$ is replaced by the matrix $Y=(y_{kj})$, whose entries are distinct variables. Let
$$D = 
\begin{pmatrix}
	Y \\
	C
\end{pmatrix}.$$
Let $\mathcal{F} \subseteq R$ be the set of the $(r+1)$-minors of $D$ and let $\mathcal{G}$ be the set of $(r+1)$-minors of $D$ that involve the last $r$ rows. 

In Section \ref{sect:spec}, we apply the results of Section \ref{sect:generic} to the study of the SupportMinors system (\ref{eqn:sm}).
For that, let $\rho:R\rightarrow P$ be the $R$-algebra homomorphism defined by $c_{ij}\mapsto c_{ij}$ and $y_{kj}\mapsto m_{kj}(\vec{x})$. Abusing notation, we denote by $\rho$ also the homomorphism $R^t \rightarrow P^t$, $t \in \mathbb{N}$, that acts as $\rho$ componentwise. We often use specializing as a synonym for computing the image of an object via $\rho$. Then $$\rho(D)=\begin{pmatrix}
	M_{\vec{x}} \\
	C
\end{pmatrix}$$ 
and $\rho(\mathcal{G})$ coincides with (\ref{eqn:sm}).

Let $I\subseteq [m+r]$ and $J\subseteq [n]$ be multisets with $|I|=|J|$. Throughout the paper, we abuse notation and write that a multiset is included in $[u]$ if its underlying set is included in $[u]$. 
For a matrix $D$, denote by $D_{I,J}$ the submatrix consisting of the rows indexed by the elements of $I$ and of the columns indexed by the elements of $J$, where a row or column appears with the same multiplicity as it appears in the corresponding index multiset.
Finally, for $I\subseteq [m+r]$ and $J\subseteq [n]$ subsets with $|I|=|J|=r+1$, let $E_{I,J}$ be the corresponding standard basis vector of $R^{\binom{r+m}{r+1} \binom{n}{r+1}}$. We often refer to the position of the only nonzero entry of $E_{I,J}$ as position $(I,J)$.

Define the map
$$\begin{array}{rcl}
\phi:R^{\binom{r+m}{r+1} \binom{n}{r+1}} & \rightarrow & R \\
E_{I,J} & \mapsto & |D_{I,J}|.
\end{array}$$
The image of the standard basis of $R^{\binom{r+m}{r+1} \binom{n}{r+1}}$ is $\mathcal{F}$ and the syzygy module of $\mathcal{F}$ is $$\Syz(\mathcal{F})=\ker(\phi).$$
Similarly, the syzygy module of $\mathcal{G}$ is $$\Syz(\mathcal{G})=\ker(\varphi),$$ where $\varphi$ is the restriction of $\phi$ to the free submodule of $R^{\binom{r+m}{r+1} \binom{n}{r+1}}$ generated by $\{E_{I,J}: I\supseteq[m+1,\ldots,m+r]\}.$
This is precisely the subset of the standard basis of $R^{\binom{r+m}{r+1} \binom{n}{r+1}}$ whose elements are mapped to the elements of $\mathcal{G}$ via $\phi$. 
For more information on the syzygy module and related concepts, we refer the interested reader to \cite[Chapter 2.3]{KR00}. We have $$\Syz(\mathcal{G})\subseteq\Syz(\mathcal{F})\subseteq R^{\binom{r+m}{r+1} \binom{n}{r+1}},$$ where $\Syz(\mathcal{G})$ denotes the syzygies of $\mathcal{F}$, which only involve the elements of $\mathcal{G}$. 
We are interested in the submodule
$$\mathbf{U}=\Syz(\mathcal{G}) \cap \KK[y_{kl}: 1 \leq k \leq m, \mbox{ } 1 \leq l \leq n]^{m \binom{n}{r+1}}$$
of syzygies of $\mathcal{G}$ which only involve the $y$-variables.
Our motivation for looking at this specific submodule comes from the algorithm proposed in~\cite{BBCGPSTV20} and will become clear in Section~\ref{sect:estimates}. 
In order to estimate the complexity of the SupportMinors Algorithm, we need to compute the dimension of certain graded components of $$\Syz(\rho(\mathcal{G}))\cap \KK[x_1,\ldots,x_{\ell}]^{m \binom{n}{r+1}}\subseteq \Syz(\rho(\mathcal{G})).$$
Clearly $\rho(\mathbf{U})\subseteq \Syz(\rho(\mathcal{G}))\cap \KK[x_1,\ldots,x_{\ell}]^{m \binom{n}{r+1}}$, however in Theorem \ref{thm:main} we will show that equality holds in degree $r+2$.

\section{Matrices of variables}\label{sect:generic}

We start by studying the simpler situation in which we replace the matrix $M_x$ by a matrix $Y$ whose entries are distinct variables.
For this, let $I\subseteq [m+r]$ and $J\subseteq [n]$ be ordered multisets with $|I|=|J|=r+2$. Let $1\leq i,j\leq |I|$. If the $i$-th element of $I$ appears more than once in $I$, then developing the determinant of $D_{I,J}$ with respect to the $i$-th row yields a syzygy of $\mathcal{F}$, that we denote by $|D_{I,J}|_{i,\bullet}$. 
Similarly, if the $j$-th element of $J$ appears more than once in $J$, then developing the determinant of $D_{I,J}$ with respect to the $j$-th column yields a syzygy of $\mathcal{F}$, that we denote by $|D_{I,J}|_{\bullet,j}$.
Finally, if $I$ and $J$ are sets, then the difference of an expansion of $|D_{I,J}|$ with respect to the $i$-th and the $j$-th row produces a syzygy of $\mathcal{F}$, which we denote by $|D_{I,J}|_{i,\bullet}-|D_{I,J}|_{j,\bullet}.$ The difference of an expansion of $|D_{I,J}|$ with respect to the $i$-th row and the $j$-th column also produces a syzygy of $\mathcal{F}$, which we denote by $|D_{I,J}|_{i,\bullet}-|D_{I,J}|_{\bullet,j}.$ 

\begin{remark}
When writing $D_{I,J}$, we think of $I$ and $J$ as ordered multiset. However, the order only affects the determinant by a sign, hence any reordering of $I$ and $J$ produces an equivalent syzygy. Throughout the paper we ignore the ordering, in the hope that this does not confuse the reader.
\end{remark}

In~\cite[Theorem 5.1]{K98}, Kurano proved that the following elements are a system of generators of $\Syz(\mathcal{F})$ as an $R$-module.

\begin{itemize}
\item[Type I:] For each $I\subseteq [m+r]$ and $J\subseteq [n]$ subsets with $|I|=r+1$, $|J|=r+2$ and for each $h\in I$, one has a syzygy $|D_{\{h\}\cup I,J}|_{1,\bullet}$, where $D_{\{h\}\cup I,J}$ is the matrix obtained from $D_{I,J}$ by adding a copy of the $h$-th row as first row.
Similarly, exchanging the roles of rows and columns, for each $I\subseteq [m+r]$ and $J\subseteq [n]$ subsets with $|I|=r+2$, $|J|=r+1$ and for each $k\in J$, one has a syzygy $|D_{I,\{k\}\cup J}|_{\bullet,1}$, where $D_{I,\{k\}\cup J}$ is the matrix obtained from $D_{I,J}$ by adding a copy of the $k$-th column as first column.
\item[Type II:] For each $I\subseteq [m+r]$ and $J\subseteq [n]$ subsets with $|I|=|J|=r+2$ and for each $h,k\in[r+2]$, one has a syzygy $|D_{I,J}|_{h,\bullet}-|D_{I,J}|_{\bullet,k}$.
\end{itemize}

Notice that all the above relations yield linear syzygies. Moreover, since the syzygies are homogeneous of degree $r+2$, then $\Syz(\mathcal{F})_{r+2}$ is generated as a $\KK$-vector space by the same syzygies.

\medskip
	
The polynomial ring $R$ can be given a standard $\mathbb{Z}^{m+r} \oplus \mathbb{Z}^n$-grading \textquoteleft\textquoteleft by rows and columns\textquoteright\textquoteright~by setting $\deg(y_{kj})=e_k+f_j \in \mathbb{Z}^{m+r} \oplus \mathbb{Z}^n$ and $\deg(c_{ij})=e_{m+i}+f_j \in \mathbb{Z}^{m+r} \oplus \mathbb{Z}^n$, where $\{e_1,\ldots,e_{m+r}\}$ is the standard basis of $\ZZ^{m+r}$ and $\{f_1,\ldots,f_n\}$ that of $\ZZ^{n}$. The multidegree of a monomial $\mu=\prod_{i=1}^{r}\prod_{k=1}^m\prod_{l=1}^n\prod_{j=1}^n y_{kl}^{\alpha_{kl}} c_{ij}^{\beta_{ij}} \in R$, where $\alpha_{kl}, \beta_{ij} \in \mathbb{Z}_{\geq0}$, is 

$$\deg(\mu)=\sum_{k=1}^{m}\sum_{l=1}^{n} \alpha_{kl} e_{k} + \sum_{i=1}^{r}\sum_{j=1}^{n} \beta_{ij} e_{m+i} + \sum_{l=1}^{n}\sum_{k=1}^{m} \alpha_{kl} f_{l} + \sum_{j=1}^{n}\sum_{i=1}^{r} \beta_{ij} f_{l}\in \ZZ^{m+r} \oplus \ZZ^{n}.$$
We often use the word multigraded to mean homogenoeus with respect to the multigrading.
Notice that the polynomials in $\mathcal{F}$ are multigraded. In fact, the minor that involves the rows and columns indexed by $I$ and $J$ has multidegree
$$\deg(|D_{I,J}|)=\sum_{i \in I} e_i + \sum_{j \in J} f_{j}.$$
Every minor in $\mathcal{G}$ involves the last $r$ rows of $D$, hence it corresponds to an $I$ of the form $I=\{h,m+1,\ldots,m+r\}$ for some $h\in[m]$. Therefore it is homogeneous of multidegree $$\deg(|D_{I,J}|)=e_h + \sum_{i=m+1}^{m+r} e_i + \sum_{j \in J} f_{j},$$ for some $h \in [m]$ and subset $J \subseteq [ n ]$ with $\abs{J} = r+1$.

\medskip
 
We can divide Kurano's syzygies in four disjoint sets:
\begin{align*}
\mathbf{S}_1=& \left\{
|D_{\{h\}\cup I,J}|_{1,\bullet}: \abs{I}=r+1, \abs{J} = r+2, h \in I \right\},\\	
\mathbf{S}_2=& \left\{
|D_{I,\{k\}\cup J}|_{\bullet,1}
: \abs{I} =r+2, \abs{J} = r+1, k \in J \right\},\\
\mathbf{S}_3=& \left\{
|D_{I,J}|_{1,\bullet}-|D_{I,J}|_{h,\bullet}
: \abs{I} = \abs{J}=r+2, 2\leq h \leq r+2\right\},\\
\mathbf{S}_4=& \left\{
|D_{I,J}|_{1,\bullet} - |D_{I,J}|_{\bullet,k}: \abs{I} = \abs{J} =r+2, 2\leq k\leq r+2\right\}
\end{align*}
Notice that $\mathbf{S}_3 \cup \mathbf{S}_4$ is smaller than the set of Type II syzygies, however it is an easy exercise to check that every Type II syzygy is a linear combination of elements of $\mathbf{S}_3 \cup \mathbf{S}_4$. The $\mathbf{S}_4$-type syzygies corresponding to $k=1$, in particular, is the sum with alternating signs of the elements of $\mathbf{S}_3$ minus the sum with alternating signs of the elements of $\mathbf{S}_4$. Therefore, the set $$\mathbf{S}=\mathbf{S}_1 \cup \mathbf{S}_2 \cup \mathbf{S}_3 \cup \mathbf{S}_4$$ generates $\Syz(\mathcal{F})$.
Notice moreover that all the elements of $\mathbf{S}$ are multigraded. In our notation, the multidegree of an element of $\mathbf{S}_1$ is
$$\deg(|D_{\{h\}\cup I,J}|_{1,\bullet})=e_h + \sum_{i \in I} e_i + \sum_{j \in J} f_j,$$
the multidegree of an elements of $\mathbf{S}_2$ is 
$$\deg(|D_{I,\{k\}\cup J}|_{\bullet,1})=\sum_{i \in I} e_i + f_k + \sum_{j \in J} f_j,$$
and that of an element of $\mathbf{S}_3\cup \mathbf{S}_4$ is $$\deg(|D_{I,J}|_{1,\bullet}-|D_{I,J}|_{h,\bullet})=\deg(|D_{I,J}|_{1,\bullet} - |D_{I,J}|_{\bullet,k})=
\sum_{i \in I} e_i + \sum_{j \in J} f_j.$$
Since the multidegrees of the elements of $\mathbf{S}_1\cup\mathbf{S}_2$ are pairwise distinct, then the elements of $\mathbf{S}_1\cup\mathbf{S}_2$ are $\mathbb{K}$-linearly independent. For the same reason, the elements of $\mathbf{S}_1\cup\mathbf{S}_2$ are linearly independent from those of $\mathbf{S}_3\cup\mathbf{S}_4$.

\medskip
In the next theorem we identify a system of generators of $\mathbf{U}_{r+2}$.
Notice that these are the syzygies identified in \cite{BBCGPSTV20}.

\begin{theorem}\label{thm:U}
Let
\begin{align*} 
\mathbf{S}'_1= &\left\{
|D_{\{h\}\cup I,J}|_{1,\bullet}
: \abs{I} =r+1, \abs{J} = r+2, I \cap [m]= \{h\} \right\},\\
\mathbf{S}'_3= &\left\{
|D_{I,J}|_{1,\bullet}-|D_{I,J}|_{2,\bullet}
: \abs{I}=\abs{J}=r+2, |I \cap [m]| = 2 \right\},
\end{align*}
where $I \subseteq [m+r]$ and $J \subseteq [n]$ are subsets. The set $\mathbf{S}':=\mathbf{S}'_1 \cup \mathbf{S}'_3$ generates $\mathbf{U}_{r+2}$ as a $\KK$-vector space.
\end{theorem}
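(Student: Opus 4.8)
The plan is to prove the two inclusions $\gen{\mathbf{S}'}\subseteq\mathbf{U}_{r+2}$ and $\mathbf{U}_{r+2}\subseteq\gen{\mathbf{S}'}$. The first is a direct inspection. An element of $\mathbf{S}'_1$ is the expansion of $|D_{\{h\}\cup I,J}|$ along a repeated copy of the $h$-th row of $D$ with $h\in[m]$, so its coefficients are $y$-variables and every minor it involves has row set $I=\{h\}\cup[m+1,\dots,m+r]$, hence lies in $\mathcal{G}$. An element of $\mathbf{S}'_3$ is the difference of the expansions of $|D_{I,J}|$ along two rows indexed by elements of $[m]$ (here $|I\cap[m]|=2$, and in the chosen ordering these are rows $1$ and $2$); deleting either of them leaves the rows $[m+1,\dots,m+r]$ in place, so again only minors of $\mathcal{G}$ occur, with $y$-coefficients. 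Since these elements already lie in $\mathbf{S}\subseteq\Syz(\mathcal{F})$, they are syzygies of $\mathcal{G}$ in the $y$-variables, i.e.\ elements of $\mathbf{U}_{r+2}$.

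For the reverse inclusion I would first observe that $\mathbf{U}$ is multigraded for the grading on $R^{\binom{r+m}{r+1}\binom{n}{r+1}}$ in which $E_{I,J}$ has multidegree $\deg|D_{I,J}|$: indeed $\Syz(\mathcal{G})=\ker\varphi$ is multigraded because $\varphi$ is, and the submodule of tuples with entries in $\KK[y_{kl}]$ is multigraded as well. Hence it suffices to take a multigraded $\sigma\in\mathbf{U}$ of total degree $r+2$ and show $\sigma\in\gen{\mathbf{S}'}$. Since the multigraded set $\mathbf{S}$ spans $\Syz(\mathcal{F})_{r+2}$ over $\KK$, projecting onto a fixed multidegree shows that $\sigma$ is a $\KK$-linear combination of elements of $\mathbf{S}$ of one common multidegree $\delta$ (and $\sigma=0$ if no element of $\mathbf{S}$ has multidegree $\delta$). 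From the multidegree formulas for $\mathbf{S}_1,\dots,\mathbf{S}_4$ computed above, $\delta$ is of exactly one of three shapes: $\mathbf{S}_1$-type (a single entry equal to $2$ among the ``row'' coordinates), $\mathbf{S}_2$-type (a single entry $2$ among the ``column'' coordinates), or $\mathbf{S}_3\cup\mathbf{S}_4$-type (all coordinates equal to $1$); in the first two cases $\delta$ is the multidegree of a unique element of $\mathbf{S}_1$, resp.\ $\mathbf{S}_2$, while in the third case it singles out a pair $(I,J)$ with $|I|=|J|=r+2$.

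If $\delta$ is of $\mathbf{S}_1$-type, then $\sigma=\lambda s$ for that unique $s\in\mathbf{S}_1$; if $\sigma\neq0$ then $s\in\mathbf{U}$, and from the shape of $s$ (an expansion along a copy of one row of $D$ that involves only minors with a fixed row set) one sees that $s\in\mathbf{U}$ forces that row to lie in $[m]$ and its row set to contain $[m+1,\dots,m+r]$, i.e.\ $s\in\mathbf{S}'_1$. If $\delta$ is of $\mathbf{S}_2$-type, the corresponding $\mathbf{S}_2$-syzygy has nonzero coefficient at $(I\setminus\{i\},J)$ for every $i$ in its $(r+2)$-element row index set $I$, and these row sets cannot all contain $[m+1,\dots,m+r]$ (their intersection is empty while $r\geq1$); hence the syzygy is not in $\Syz(\mathcal{G})$ and $\sigma=0$. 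The crucial case is $\delta$ of $\mathbf{S}_3\cup\mathbf{S}_4$-type, with associated pair $(I,J)$; here I would write $\sigma=\sum_{p}a_p\,|D_{I,J}|_{p,\bullet}+\sum_{q}b_q\,|D_{I,J}|_{\bullet,q}$ with $\sum_p a_p+\sum_q b_q=0$, keeping in mind the one relation $\sum_p|D_{I,J}|_{p,\bullet}=\sum_q|D_{I,J}|_{\bullet,q}$ among these cofactor vectors (so the $a_p,b_q$ are unique only up to a common shift $a_p\mapsto a_p+t$, $b_q\mapsto b_q-t$). The coefficient of $\sigma$ at position $(I\setminus\{i_p\},J\setminus\{j_q\})$ is, up to sign, $(a_p+b_q)$ times the entry of $D$ in row $i_p$ and column $j_q$. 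As $|I|=r+2$ forces $|I\cap[m]|\geq2$, two subcases arise. If $|I\cap[m]|\geq3$, then no set $I\setminus\{i_p\}$ contains $[m+1,\dots,m+r]$, so $\sigma$ occupies no position of $\mathcal{G}$ and $\sigma=0$. If $|I\cap[m]|=2$, order $I$ so that its two elements of $[m]$ come first; then for every $p\geq3$ the position $(I\setminus\{i_p\},\cdot)$ is not a $\mathcal{G}$-position, so membership in $\Syz(\mathcal{G})$ forces $a_p+b_q=0$ for all $q$, whence $b_q$ is a constant $\beta$ and $a_p=-\beta$ for all $p\geq3$. Substituting the relation $\sum_q|D_{I,J}|_{\bullet,q}=\sum_p|D_{I,J}|_{p,\bullet}$ then cancels all column-expansion terms and all row terms with $p\geq3$, and $\sum_p a_p+\sum_q b_q=0$ gives $\sigma=c\,(|D_{I,J}|_{1,\bullet}-|D_{I,J}|_{2,\bullet})$ with $c\in\KK$, a scalar multiple of an element of $\mathbf{S}'_3$. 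In every case $\sigma\in\gen{\mathbf{S}'}$, as desired. (The same bookkeeping shows that each nonzero multigraded component of $\mathbf{U}_{r+2}$ is one-dimensional and spanned by a single element of $\mathbf{S}'$, so $\mathbf{S}'$ is in fact a $\KK$-basis of $\mathbf{U}_{r+2}$.)

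The step I expect to be the main obstacle is the $\mathbf{S}_3\cup\mathbf{S}_4$-type case: it requires arguing directly with the cofactor vectors of the two Laplace expansions, handling the single relation $\sum_p|D_{I,J}|_{p,\bullet}=\sum_q|D_{I,J}|_{\bullet,q}$ carefully so as not to overcount the dimension of the space under consideration, and exploiting membership in $\Syz(\mathcal{G})$ --- rather than merely in $\Syz(\mathcal{F})$ --- to eliminate the $c$-variable and column-expansion contributions.
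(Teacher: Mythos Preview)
Your proof is correct and follows essentially the same route as the paper: reduce to a multigraded syzygy, use Kurano's generators to split into $\mathbf{S}_1$-, $\mathbf{S}_2$-, and $\mathbf{S}_3\cup\mathbf{S}_4$-type multidegrees, and in the last case eliminate the unwanted terms by inspecting entries at positions outside $\mathcal{G}$. The only cosmetic difference is that you parametrize the $\mathbf{S}_3\cup\mathbf{S}_4$-piece by the raw cofactor vectors $|D_{I,J}|_{p,\bullet},|D_{I,J}|_{\bullet,q}$ modulo the single relation $\sum_p|D_{I,J}|_{p,\bullet}=\sum_q|D_{I,J}|_{\bullet,q}$ and invoke the support constraint from $\Syz(\mathcal{G})$, whereas the paper works directly in the basis $\mathbf{S}_3\cup\mathbf{S}_4$ and invokes the ``no $c$-variables'' part of the $\mathbf{U}$ condition; since the entries at non-$\mathcal{G}$ positions in these syzygies are precisely $c$-variables, the two arguments are equivalent.
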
	 

\begin{proof}
It is easy to check that $\mathbf{S}^\prime_1\cup\mathbf{S}^\prime_3\subseteq\mathbf{U}_{r+2}$. We now prove that every element of $\mathbf{U}_{r+2}$ can be written as a linear combination of the elements of $\mathbf{S}^\prime_1\cup\mathbf{S}^\prime_3$.

Let $T\in\mathbf{U}_{r+2}$.
Since $\mathbf{U}$ is multigraded, by considering each homogeneous component we may assume without loss of generality that $T$ is multigraded.
Since $T\in\Syz(\mathcal{G})_{r+2}$, then $T$ is an element of $\Syz(\mathcal{F})_{r+2}$ which belongs to the submodule generated by $E_{H,K}$ with $H\supseteq\{m+1,\ldots,m+r\}$, i.e., an element whose multidegree is bigger than $\sum_{i=m+1}^{m+r} e_i+\sum_{j \in K} f_{j}$ for some subset $K\subseteq[n]$ of $|K|=r+1$. More precisely 
$$\deg(T)=e_h+\sum_{i=m+1}^{m+r} e_i + e_{i^*}+ \sum_{j \in J} f_{j}  + f_{j^{*}},$$
for some $h \in [m]$, $i^* \in [m+r]$, $J \subseteq [n]$, $|J|= r+1$, and $j^* \in [n]$.
Since $T\in\mathbf{U}$, then its entries only involve the variables $y_{i,j}$. This forces $i^* \in [m]$.

Since $T \in \mathbf{U}_{r+2}\subseteq \Syz(\mathcal{F})_{r+2}$, $T$ can be written as linear combination of elements in $\mathbf{S}$.
By comparing the degree of $T$ with those of the elements of $\mathbf{S}$, one sees that either $T\in\mathbf{S}_1$
or $T\in\langle\mathbf{S}_3\cup\mathbf{S}_4\rangle$.
In the first case, $i^*=h$ and $T\in\mathbf{S}^\prime_1$, since $\mathbf{S}^\prime_1$ consists of the elements of $\mathbf{S}_1$ with $I=\{h,m+1,\ldots,m+r\}$. In the second case,  $T\in\langle\mathbf{S}_3\cup\mathbf{S}_4\rangle$ has multidegree $\sum_{i\in I} e_{i}+\sum_{j\in J} e_{j}$ for some $I,J$ with $\{m+1,\ldots,m+r\}\subseteq I\subseteq [m+r]$ and $J\subseteq[n]$ of $|I|=|J|=r+2$. For ease of notation, denote by $S_{h,\bullet}=|D_{I,J}|_{1,\bullet}-|D_{I,J}|_{h,\bullet}\in\mathbf{S}_3$ and $S_{\bullet,k}=|D_{I,J}|_{1,\bullet} - |D_{I,J}|_{\bullet,k}\in\mathbf{S}_4$.
Write \begin{equation}\label{eqn:T}
T=\sum_{h=2}^{r+2}\alpha_h S_{h,\bullet}+\sum_{k=2}^{r+2}\beta_kS_{\bullet,k}
\end{equation} for some $\alpha_h,\beta_k\in\mathbb{K}$. 
The element $S_{\bullet,k}$ has an entry $c_{r,k}$ in position $I \setminus \{m+r\}, J\setminus \{k\}$ and no other element appearing in the sum~(\ref{eqn:T}) has a nonzero entry in the same position. For $3\leq h\leq r+2$, the element $S_{h,\bullet}$ has an entry $c_{h-2,1}$ in position $I\setminus \{m+h-2\}, J\setminus \{1\}$ and no other element in the sum~(\ref{eqn:T}) has a nonzero entry in the same position. Since $T$ does not involve the variables $c_{i,j}$, this proves that $\beta_k=0$ for $2\leq k\leq r+2$ and $\alpha_h=0$ for $3\leq h\leq r+2$. This yields the set $\mathbf{S}_3^\prime$. 
\end{proof}

\section{The general case}\label{sect:spec}

In this section we discuss the general case when the entries of $M_{\vec{x}}$ are linear forms in $x_1,\ldots,x_K$. 
Consider $P=\KK[x_\ell,c_{ij}: \ell\in[K], i\in[r], j\in[n]]$ and the $R$-algebra homomorphism $\rho:R\rightarrow P$ given by $c_{ij}\mapsto c_{ij}$ and $y_{kj}\mapsto m_{kj}(\vec{x})$, where $m_{kj}(\vec{x})$ are the entries of the matrix $M_{\vec{x}}$. Abusing notation, we denote by $\rho$ also the homomorphism $R^t \rightarrow P^t$, $t \in \mathbb{N}$, that acts as $\rho$ componentwise. We often use specializing as a synonym for computing the image of an object via $\rho$. 

We are interested in the syzygies of the $(r+1)$-minors of the matrix $$\rho(D)=\begin{pmatrix}
	M_{\vec{x}} \\
	C
\end{pmatrix}$$ 
which involve the last $r$ rows, that is, the syzygies of $\rho(\mathcal{G})$. In particular, we want to compute the module of syzygies of $\rho(\mathcal{G})$ that only involve the $x$-variables.
Since $\rho$ is a homomorphism, specializing the elements of $\mathbf{U}$ yields syzygies of $\rho(\mathcal{G})$ in the $x$-variables.
In other words,
$$\rho(\mathbf{U}) \subseteq \Syz(\rho(\mathcal{G}))\cap\mathbb{K}[x_1,\ldots,x_K] \subseteq P^{\binom{r+m}{r+1} \binom{n}{r+1}}.$$

\subsection{The case $b=2$}

In this subsection, we argue that there exists a subset of choices of coefficients of the linear forms $m_{jk}$ that is dense in the Zariski topology and for which $\rho(\mathbf{S}')$ generates $\Syz(\rho(\mathcal{G}))_{r+2}\cap\mathbb{K}[x_1,\ldots,x_K]$.
We start with a preliminary lemma on the supports of the syzygies before specialization. We follow the notation of the previous section. 
For brevity, we say that that $S\in\Syz(\mathcal{F})$ is supported on $\mathcal{H}$ to mean that $S$ is supported on the positions corresponding to $\mathcal{H}$, for $\mathcal{H}\subseteq\mathcal{F}$.

\begin{lemma}\label{lemma:supp}
Each $S\in \mathbf{S}$ falls into one of these mutually exclusive cases:
\begin{enumerate}[i)]
\item $S$ is supported on the positions corresponding to $\mathcal{G}$.
\item $S$ is supported on the positions corresponding to $\mathcal{F}\setminus \mathcal{G}$.
\item $S$ does not fall into case i) or ii) and it involves a variable $c_{ij}$ in a position that corresponds to an element of $\mathcal{F}\setminus \mathcal{G}$.
\end{enumerate}  
\end{lemma}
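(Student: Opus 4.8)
The plan is to analyze each of the four families $\mathbf{S}_1,\mathbf{S}_2,\mathbf{S}_3,\mathbf{S}_4$ separately, tracking for each syzygy both its support (which positions $(I,J)\in\mathcal{F}$ carry a nonzero entry) and whether any of those entries is a $c$-variable sitting in a position that corresponds to an element of $\mathcal{F}\setminus\mathcal{G}$. Recall that a position $(I,J)$ lies in $\mathcal{G}$ exactly when $\{m+1,\dots,m+r\}\subseteq I$, i.e. the minor uses all $r$ bottom ($C$-)rows, and it lies in $\mathcal{F}\setminus\mathcal{G}$ otherwise. The first step is to set up this bookkeeping: for an element of $\mathbf{S}$ obtained by expanding $|D_{I,J}|$ along a repeated row (or column), the nonzero entries of the syzygy are indexed by the submatrices of $D_{I,J}$ obtained by deleting one of the repeated rows (columns) and the corresponding column (row), and the coefficient is the matrix entry that was deleted — a $y$-variable if the deleted row is among the top $m$, a $c$-variable if it is among the bottom $r$.

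Next I would run through the cases. For $\mathbf{S}_2$ (expansion along a repeated column $k$ of an $(r+2)\times(r+1)$ row-set $I$): here the row multiset is a genuine set $I$ of size $r+2$, so at most one of its elements lies in $[m+1,m+r]$... more carefully, $|I\cap[m+1,m+r]|$ is $r-1$, $r$, or $r+1$; after deleting one row of $I$ we obtain the index set of a position, and whether that position lies in $\mathcal{G}$ depends on whether the deleted row was a $C$-row. One checks that either all deleted positions keep all $r$ $C$-rows (case i), or all lose one $C$-row (case ii), or the syzygy mixes the two — and in that mixed situation the entries attached to the $\mathcal{F}\setminus\mathcal{G}$ positions are precisely the deleted $C$-row entries, which are $c$-variables, giving case iii). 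A parallel but easier analysis handles $\mathbf{S}_1$ (repeated row $h$, column-set of size $r+2$): here no column is ever deleted from the bottom rows' perspective, the row structure is $\{h\}\cup I$ with $h\in I$, and one sees that $\mathbf{S}'_1\subseteq\mathbf{S}_1$ lands in case i) while the remaining elements of $\mathbf{S}_1$ land in case ii) or iii) according to whether $I$ itself already contains all $C$-rows. For $\mathbf{S}_3$ and $\mathbf{S}_4$ (differences of two expansions of $|D_{I,J}|$ with $I,J$ of size $r+2$): the support is contained in the positions obtained by deleting one row and one column from $D_{I,J}$; if $\{m+1,\dots,m+r\}\subseteq I$ then deleting a $C$-row leaves $r-1$ $C$-rows (an $\mathcal{F}\setminus\mathcal{G}$ position) and its coefficient in the syzygy is a $\pm$ times a cofactor which, for the relevant terms, carries a $c$-variable — this is case iii) unless the deleted row is never a $C$-row, in which case we are in case i); and if $I$ omits some $C$-row entirely then every position already lies in $\mathcal{F}\setminus\mathcal{G}$, giving case ii).

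The main obstacle I anticipate is case iii): proving not merely that some position of the syzygy lies in $\mathcal{F}\setminus\mathcal{G}$, but that one can always point to such a position whose \emph{entry} is a $c$-variable rather than a $y$-variable. The cleanest way to secure this is the same local argument used in the proof of Theorem~\ref{thm:U}: when we expand $|D_{I,J}|$ along a row or column, the coefficient appearing in position $(I\setminus\{i\},J\setminus\{j\})$ is exactly the $(i,j)$-entry of $D_{I,J}$ up to sign, so if $i$ is one of the bottom rows $m+1,\dots,m+r$ then that coefficient is literally $\pm c_{i-m,\,j}$; choosing $i$ to be a $C$-row that is present in $I$ but whose deletion drops the position out of $\mathcal{G}$ then simultaneously witnesses "not in $\mathcal{G}$" and "entry is a $c_{ij}$." Finally I would remark that cases i), ii), iii) are mutually exclusive by construction (i) and ii) are disjoint since a nonzero syzygy has nonempty support, and iii) is defined as the complement of their union), so every $S\in\mathbf{S}$ falls into exactly one of them, completing the proof.
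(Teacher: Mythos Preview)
Your approach is the same as the paper's: a case-by-case analysis over $\mathbf{S}_1,\mathbf{S}_2,\mathbf{S}_3,\mathbf{S}_4$, tracking the row-set of each position in the support and noting that a position lies in $\mathcal{G}$ exactly when its row-set contains $\{m+1,\dots,m+r\}$. The key observation in your last paragraph --- that when a $C$-row is the one deleted to form a position, the accompanying coefficient is a $c$-variable --- is exactly what drives case iii), and it matches the paper.

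However, several of your specific case determinations are wrong and would need to be corrected:

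\smallskip
\textbf{For $\mathbf{S}_2$.} You write that $|I\cap[m+1,m+r]|$ is ``$r-1$, $r$, or $r+1$'' (and before that, ``at most one''). There are only $r$ bottom rows, so the value $r+1$ is impossible; the correct constraint from $|I|=r+2$ is $|I\cap[m]|\geq 2$, i.e.\ $|I\cap[m+1,m+r]|\leq r$, with no further lower bound. You also allow case i) for $\mathbf{S}_2$: it cannot occur, since the column expansion runs over \emph{every} $i\in I$, and as $r\geq 1$ there is always some $i\in I\cap[m+1,m+r]$ whose deletion produces a position missing a $C$-row. The paper splits $\mathbf{S}_2$ simply as $|I\cap[m]|>2$ (case ii) versus $|I\cap[m]|=2$ (case iii).

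\smallskip
\textbf{For $\mathbf{S}_1$.} You say the elements outside $\mathbf{S}'_1$ land in ``case ii) or iii)''. In fact they land only in case ii): expanding along the duplicated first row always leaves the row-set equal to $I$, so every position in the support has the \emph{same} row-set, and the support is either entirely in $\mathcal{G}$ (exactly when $I\supseteq\{m+1,\dots,m+r\}$, i.e.\ $S\in\mathbf{S}'_1$) or entirely in $\mathcal{F}\setminus\mathcal{G}$. No element of $\mathbf{S}_1$ is in case iii).

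\smallskip
\textbf{For $\mathbf{S}_3$ and $\mathbf{S}_4$.} Your joint treatment is too vague to verify case iii). The paper handles them separately, using that $|I|=r+2$ forces $|I\cap[m]|\geq 2$, so $i_1,i_2\in[m]$. For $\mathbf{S}_3$ with $|I\cap[m]|=2$ the decisive sub-split is $h=2$ (both expansions delete a top row, case i)) versus $h>2$ (the second expansion deletes the $C$-row $i_h$ and contributes coefficients $c_{i_h-m,j}$ at $\mathcal{F}\setminus\mathcal{G}$ positions, case iii)). For $\mathbf{S}_4$ with $|I\cap[m]|=2$ one is always in case iii), since the column expansion hits every row of $I$, including the $C$-rows $i_s$ with $s>2$.
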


\begin{proof}
Let $S\in\mathbf{S}=\mathbf{S}_1\cup\mathbf{S}_2\cup\mathbf{S}_3\cup\mathbf{S}_4$.
If $S\in \mathbf{S}_1$, then $S=\abs{D_{\{h\}\cup I, J}}_{1,\bullet}$ where $I\subseteq[m+r]$, $J\subseteq[n]$ are subsets, $\abs{I}=r+1$, $\abs{J} = r+2$, and $h \in I$. 
In particular, $\abs{I\cap [m]}\ge 1$.
If $\abs{I\cap [m]}>1$, then $S$ is supported on $\mathcal{F}\setminus \mathcal{G}$ and it falls in case ii).
If $\abs{I\cap [m]}=1$, then $S$ is supported on $\mathcal{G}$ and it falls in case i).

If $S\in\mathbf{S}_2$, then $S=\abs{D_{I, \{k\}\cup J}}_{\bullet,1}$ where $I\subseteq[m+r]$, $J\subseteq[n]$ are subsets, $\abs{I} =r+2$, $\abs{J} = r+1$, and $k \in J$. Notice that $\abs{I\cap [m]}\ge 2$.
If $\abs{I\cap [m]}>2$, then $S$ is supported on $\mathcal{F}\setminus \mathcal{G}$ and it falls in case ii).
If $\abs{I\cap [m]}=2$, then $S$ is the sum with alternating signs over $i\in I$ of $d_{i k}E_{I\setminus \{i\},J}$, where $d_{i k}$ is the entry of $D$ in position $(i,k)$. Since $r\geq 1$, then $|I|\geq 3$, so there exists $i^*\in I\setminus[m]$. Let $\iota=\min(I)$. Then $E_{I\setminus \{\iota\},J}$ is supported on $\mathcal{G}$ and $d_{\iota k}=y_{\iota k}$. Moreover, $E_{I\setminus \{i^*\},J}$ is supported on $\mathcal{F}\setminus\mathcal{G}$ and $d_{i^* k} = c_{i^*-m k}$, so $S$ falls in case iii).

If $S\in \mathbf{S}_3$, then  $S=\abs{D_{I,J}}_{1,\bullet} - \abs{D_{I,J}}_{h,\bullet}$ where $I\subseteq[m+r]$, $J\subseteq[n]$ are subsets, $\abs{I}= \abs{J} =r+2 $, and $2\le h\le r+2$.
Notice that $\abs{I\cap [m]}\ge 2$.
If $\abs{I\cap [m]}>2$, then $S$ is supported on $\mathcal{F}\setminus \mathcal{G}$ and it falls in case ii).
If $\abs{I\cap [m]}=2$ and $h=2$, then $S$ is supported on $\mathcal{G}$ and it falls in case i).
If $\abs{I\cap [m]}=2$ and $h>2$, then 
\[S = \sum_{j\in J} (-1)^j d_{i_1 j}E_{I\setminus \{i_1\},J\setminus \{j\}} - \sum_{j\in J} (-1)^{j+h} d_{i_h j}E_{I\setminus \{i_h\},J\setminus \{j\}},\]
where $I=\{i_1,\ldots,i_{r+2}\}$ with $i_1<\cdots<i_{r+2}$.
Notice that $E_{I\setminus \{i_1\},J\setminus \{j\}}$ is supported on $\mathcal{G}$ and $d_{i_1 j} = y_{i_1 j}$ for all $j\in J$.
Moreover, $E_{I\setminus \{i_h\},J\setminus \{j\}}$ is supported on $\mathcal{F}\setminus\mathcal{G}$ and $d_{i_h j} = c_{i_h-m j}$ for all $j\in J$, so $S$ falls in case iii).

If $S\in \mathbf{S}_4$, then $S=\abs{D_{I,J}}_{1,\bullet} - \abs{D_{I,J}}_{\bullet,k}$ where $I\subseteq[m+r]$, $J\subseteq[n]$ are subsets, $\abs{I}= \abs{J} =r+2 $, and $2\leq k\leq r+2$.
Notice that $\abs{I\cap [m]}\ge 2$.
If $\abs{I\cap [m]}>2$, then $S$ is supported on $\mathcal{F}\setminus \mathcal{G}$ and it falls in case ii).
If $\abs{I\cap [m]}=2$, then
\[S = \sum_{t=1}^{r+2} (-1)^{t+1} d_{i_1 j_t}E_{I\setminus \{i_1\},J\setminus \{j_t\}} - \sum_{s=1}^{r+2} (-1)^{s+k} d_{i_s j_k}E_{I\setminus \{i_s\},J\setminus \{j_k\}},\]
where $I=\{i_1,\ldots,i_{r+2}\}$ with $i_1<\cdots<i_{r+2}$ and $J=\{j_1,\ldots,j_{r+2}\}$ with $j_1<j_2<\cdots<j_{r+2}$. 
For $s>2$, $E_{I\setminus \{i_s\},J\setminus \{j_k\}}$ is supported on $\mathcal{F}\setminus\mathcal{G}$ and $d_{i_s j_k} = c_{i_s-m, j_k}$. 
Hence $S$ falls in case iii).
\end{proof}

We now prove that every syzygy of $\mathcal{F}$ which specializes to a nonzero syzygy of $\rho(\mathcal{G})$ is in fact a syzygy of $\mathcal{G}$.

\begin{theorem}\label{thm:supp}
Let $P=\KK[x_\ell,c_{ij}: \ell\in[K], i\in[r], j\in[n]]$ be an $R$-algebra with homomorphism $\rho:R\rightarrow P$ given by $c_{ij}\mapsto c_{ij}$ and $y_{kj}\mapsto m_{kj}$.
Let $T\in \Syz(\mathcal{F})$. If $\rho(T)\in \Syz(\rho(\mathcal{G}))\setminus\{0\}$, then $T\in\Syz(\mathcal{G})$.
\end{theorem}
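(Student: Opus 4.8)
Unpacking the statement, the hypothesis $\rho(T)\in\Syz(\rho(\mathcal{G}))$ says precisely that every coordinate of $\rho(T)$ indexed by $\mathcal{F}\setminus\mathcal{G}$ vanishes, and the conclusion $T\in\Syz(\mathcal{G})$ says that every coordinate of $T$ indexed by $\mathcal{F}\setminus\mathcal{G}$ vanishes. The plan is to expand $T$ over Kurano's generators, $T=\sum_{S\in\mathbf{S}}g_S S$ with $g_S\in R$, and to show that $g_S=0$ for every $S$ that is not supported on $\mathcal{G}$. Since a generator supported on $\mathcal{G}$ is a syzygy of $\mathcal{F}$ with no $\mathcal{F}\setminus\mathcal{G}$-component, hence lies in $\Syz(\mathcal{G})=\ker\varphi$, this would give $T\in\Syz(\mathcal{G})$.

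By Lemma~\ref{lemma:supp} every $S\in\mathbf{S}$ is of one of three types: (i) supported on $\mathcal{G}$, which contributes nothing to the coordinates indexed by $\mathcal{F}\setminus\mathcal{G}$ and may be discarded; (ii) supported on $\mathcal{F}\setminus\mathcal{G}$; or (iii) mixed, carrying an honest variable $c_{ij}$ in one prescribed coordinate indexed by $\mathcal{F}\setminus\mathcal{G}$ while its prescribed $\mathcal{G}$-coordinate carries a $y$-variable. Write $\pi$ for the projection onto the coordinates indexed by $\mathcal{F}\setminus\mathcal{G}$. Applying $\rho$ to $T=\sum_S g_S S$ and then $\pi$, and using the hypothesis, one obtains
\[
\sum_{S\ \text{of type (ii)}}\rho(g_S)\,\rho(S)\;+\;\sum_{S\ \text{of type (iii)}}\rho(g_S)\,\pi(\rho(S))\;=\;0 .
\]
Now $\rho$ fixes the $c_{ij}$ and maps each $y_{kj}$ to a linear form in the $x_\ell$, so the $c$-variable markers furnished by the proof of Lemma~\ref{lemma:supp} survive specialization. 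I would exploit this identity coordinate by coordinate: for each $(I,J)$ corresponding to an element of $\mathcal{F}\setminus\mathcal{G}$, identify which generators of types (ii) and (iii) feed into that coordinate and with which monomials, and use the distinguished $c_{ij}$-markers to peel off the coefficients $\rho(g_S)$, and then the $g_S$ themselves, forcing $g_S=0$ for all $S$ of types (ii) and (iii). This leaves only type-(i) generators in the expansion of $T$, hence $T\in\Syz(\mathcal{G})$.

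The step I expect to be the main obstacle is exactly this coordinate-wise elimination. A priori several type-(ii) and type-(iii) generators contribute to one and the same coordinate indexed by $\mathcal{F}\setminus\mathcal{G}$, and one must show that after specialization their contributions cannot cancel. This is where the hypothesis $\rho(T)\neq 0$ is indispensable — without it one could add to $T$ any syzygy of $\mathcal{F}$ lying in the kernel of $\rho$ yet not supported on $\mathcal{G}$ — and where the fine structure recorded in Lemma~\ref{lemma:supp} has to be used in full: that each mixed generator retains a known $c_{ij}$ in a known $\mathcal{F}\setminus\mathcal{G}$-position and a $y$-variable (hence, after specialization, a linear form in the $x_\ell$) in its $\mathcal{G}$-position, so that the surviving terms involve the $c_{ij}$ and the $x_\ell$ independently enough for the peeling-off to close.
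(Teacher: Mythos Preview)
Your overall strategy coincides with the paper's: partition $\mathbf S$ into the three cases of Lemma~\ref{lemma:supp}, subtract the type-(i) contribution, and then use the $c$-markers surviving under $\rho$ to eliminate the type-(iii) coefficients. Two points, however, separate the proposal from a complete argument.

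First, the ``coordinate-wise elimination'' that you flag as the main obstacle is essentially the entire content of the proof, and the paper resolves it by a concrete case split rather than by an abstract appeal to independence of the $c_{ij}$ and $x_\ell$. The key reduction is that a $c_{ij}$ appearing in position $(I,J)$ of $T$ can only originate from a generator attached to the multisets $I\cup\{i\}$ and $J\cup\{j\}$, so one may work one such pair of multisets at a time. Within a single family the paper then runs through the three possible shapes of a type-(iii) generator (the $\mathbf S_2$-generator with a repeated column, the $\mathbf S_3$-generator with $h>2$, and the $\mathbf S_4$-generator with $k\ge 2$) and exhibits, for each, a specific $c$-entry in a specific $\mathcal F\setminus\mathcal G$-position that no other generator in that family reaches; this yields $\alpha_S=0$ one by one. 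None of this uses the hypothesis $\rho(T)\neq 0$; your remark that it is ``indispensable'' for the peeling is off the mark.

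Second, you aim to force $g_S=0$ for the type-(ii) generators as well, and this is neither what the paper does nor achievable in general: one may add to $T$ any combination of type-(ii) generators killed by $\rho$ without changing $\rho(T)$, so there is no hope of pinning down those coefficients from the hypothesis. The paper instead observes that, once all type-(iii) coefficients vanish, the residual $T$ is a combination of type-(ii) generators and is therefore supported on $\mathcal F\setminus\mathcal G$; since $\rho(T)$ must lie in $\Syz(\rho(\mathcal G))$ (supported on $\mathcal G$), this forces $\rho(T)=0$ for the residual, and hence the original $T$ agrees with its type-(i) part up to something mapping to zero. That support argument, not a second round of $c$-marker elimination, is what closes the proof.
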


\begin{proof} 
Let $\mathbf{T}_1$ be the set of elements of $\mathbf{S}$ that fall into case i) of Lemma~\ref{lemma:supp}, let $\mathbf{T}_2$ be the set of those that fall into case ii), and let $\mathbf{T}_3$ be the set of those that fall into case iii). By Lemma~\ref{lemma:supp}, $\mathbf{S}=\mathbf{T}_1\cup\mathbf{T}_2\cup\mathbf{T}_3$. 
Let $T\in \Syz(\mathcal{F})$ and suppose that $\rho(T)\in \Syz(\rho(\mathcal{G}))$.
Since $\mathbf{S}$ generates $\Syz(\mathcal{F})$, by Lemma~\ref{lemma:supp} we can write
\begin{equation}\label{eqn:expT}
T=\sum_{S\in\mathbf{T}_1}\alpha_S S + 
\sum_{S\in\mathbf{T}_2}\alpha_S S + 
\sum_{S\in\mathbf{T}_3} \alpha_S S.
\end{equation}
Up to replacing $T$ by $T-\sum_{S\in\mathbf{T}_1}\alpha_S S$, we may assume that $\alpha_S=0$ for every $S\in\mathbf{T}_1$. In particular,
\begin{equation}\label{eqn:rhoT}
\rho(T) = \sum_{S\in\mathbf{T}_2}\alpha_S \rho(S) + 
\sum_{S\in\mathbf{T}_3} \alpha_S\rho(S).
\end{equation}
Our thesis corresponds to proving that $T=0$. 

We start by discussing how the support changes when passing from $T$ to $\rho(T)$. Since $\rho(c_{ij})=c_{ij}$ for all $i$ and $j$, if one entry of $T$ involves a $c$-variable with a nonzero coefficient, then the corresponding entry of $\rho(T)$ involves the same $c$-variable with the same coefficient. In particular, when looking at the $c$-variables, the supports of $T$ and $\rho(T)$ coincide. Since $\rho(T)\in\Syz(\mathcal{G})$, the positions of $T$ corresponding to elements of $\mathcal{F}\setminus\mathcal{G}$ cannot involve any $c$ variable.

A coefficient $c_{ij}$ in position $(I,J)$ can only come from an element of $\mathbf{T}_2$ or $\mathbf{T}_3$ corresponding to the multisets $I\cup\{i\}$ and $J\cup\{j\}$. Therefore, if $c_{ij}E_{I,J}$ comes from a syzygy $\rho(S)$ and cancels in (\ref{eqn:expT}), then it cancels with a summand $c_{ij}E_{I,J}$ coming from a different syzygy $\rho(S^\prime)$ which corresponds to the same multisets $I\cup\{i\}$ and $J\cup\{j\}$. Therefore, we may restrict to $D_{I\cup\{i\},J\cup\{j\}}$ and only discuss which cancellations occur in (\ref{eqn:expT}) for syzygies that originate from it. 

Let $S\in\mathbf{T}_3$ and let $I,J$ be the multisets from which $S$ originates. Then $I\subseteq[m+r]$, $J\subseteq[n]$, $|I|=|J|=r+2$ and one of the following must hold:
\begin{enumerate}
\item $I$ is a set with $|I\cap[m]|=2$, $J$ contains one element $k$ with multiplicity two and every other element has multiplicity one,  and $S\in\mathbf{S}_2$.
\item $I$ and $J$ are sets of cardinality $r+2$, $|I\cap[m]|=2$, and $S\in\mathbf{S}_3$ with $h>2$.
\item $I$ and $J$ are sets of cardinality $r+2$, $|I\cap[m]|=2$, and $S\in\mathbf{S}_4$ with $k>1$.
\end{enumerate}

In case 1., there exists exactly one syzygy $\Sigma\in\mathbf{T}_3$ that originates from those $I$ and $J$. Write $I=\{i_1,\ldots,i_{r+2}\}$ with $i_1<\ldots<i_{r+2}$, $J=\{k,j_1,\ldots,j_{r+1}\}$ with $j_1<\ldots<j_{r+1}$ and $k\in\{j_1,\ldots,j_{r+1}\}$. In this case we have $\alpha_{\Sigma}=0$, since otherwise the element $\alpha_{\Sigma}c_{i_3 k}E_{I\setminus\{i_3\},\{j_1,\ldots,j_{r+1}\}}$ does not cancel in (\ref{eqn:expT}), as there is exactly one syzygy in $\mathbf{T}_2\cup \mathbf{T}_3$ where $c_{i_3 k}$ appears in position $I\setminus\{i_3\},\{j_1,\ldots,j_{r+1}\}$.

In cases 2. and 3., assume for ease of notation that $I=\{1,2,m+1,\ldots,m+r\}$ and $J=[r+2]$.
Because of what we discussed above, when studying cancellations among the $c$-variables, we may restrict our attention to the syzygies that originate from the $I$ and $J$ that we just fixed. For $h>2$, $c_{h-2,1}E_{I\setminus\{m+h-2\},\{2,\ldots,r+2\}}$ only appears in syzygies obtained from developing the determinant of $D_{I,J}$ with respect to row $h$ or column $1$. The only syzygy in $\mathbf{T}_2\cup \mathbf{T}_3$ which involves $c_{h-2,1}$ is $\Sigma_h=|D_{I,J}|_{1,\bullet} - |D_{I,J}|_{h,\bullet}$. Therefore no cancellation is possible and $\alpha_{\Sigma_h}=0$ for $h>2$. For a fixed $2\leq k\leq r+2$, $c_{1,k}E_{I\setminus\{m+1\},J\setminus\{k\}}$ only appears in syzygies obtained from developing the determinant of $D_{I,J}$ with respect to row $3$ or column $k$.
Since $\alpha_{\Sigma_3}=0$, the only syzygy in which it appears is $\Theta_k=|D_{I,J}|_{1,\bullet} - |D_{I,J}|_{\bullet,k}$. Again, no cancelation is possible, showing that $\alpha_{\Theta_k}=0$ for $k>1$.

This proves that, if $\rho(T)\in\Syz(\mathcal{G})$, then there is an expression of $T$ as in (\ref{eqn:expT}) that does not involve any element of $\mathbf{T}_3$. Since the support of $\rho(S)$ is disjoint from $\mathcal{G}$ for every $S\in\mathbf{T}_2$, we deduce that $\rho(T)\in\Syz(\mathcal{G})$ forces $T=0$. 
\end{proof}

The next theorem is the main result of this paper. It shows that, for $K$ sufficiently large and for a generic choice of $M_1,\ldots,M_K$, the linear syzygies of $\rho(\mathcal{G})$ which only involve the $x$-variables are generated by the specializations of the linear syzygies of $\mathcal{G}$ which only involve the $y$-variables. We follow the same notation as in the rest of the section.

\begin{theorem}\label{thm:main}
For $K\geq m(n-r)$ and generic $M_1,\ldots,M_K$, the set $\rho(\mathbf{S}')$ generates $\rho(\mathbf{U})_{r+2}$ as a $\mathbb{K}$-vector space.
\end{theorem}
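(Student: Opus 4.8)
The plan is to deduce the statement from Theorems~\ref{thm:U} and~\ref{thm:supp} together with a genericity argument. Since $\rho$ is $\KK$-linear and degree preserving, $\rho(\mathbf{U})_{r+2}=\rho(\mathbf{U}_{r+2})$, and Theorem~\ref{thm:U} gives $\mathrm{span}_{\KK}\rho(\mathbf{S}')=\rho(\mathbf{U}_{r+2})$; so proving that $\rho(\mathbf{S}')$ generates $\rho(\mathbf{U})_{r+2}$ is equivalent to proving the inclusion
\[
\Syz(\rho(\mathcal{G}))_{r+2}\cap\KK[x_1,\ldots,x_K]^{m\binom{n}{r+1}}\ \subseteq\ \rho(\mathbf{U})_{r+2},
\]
the reverse inclusion having been noted in Section~\ref{sect:prelim}. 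To prove it I would take a linear syzygy $\sigma\neq 0$ of $\rho(\mathcal{G})$ in the $x$-variables, lift it to a syzygy $T$ of $\mathcal{F}$, use Theorem~\ref{thm:supp} to recognise $T$ as a syzygy of $\mathcal{G}$, note that $T$ involves no $c$-variable and so lies in $\mathbf{U}_{r+2}$, and then invoke Theorem~\ref{thm:U} to write $T$ as a $\KK$-combination of $\mathbf{S}'$; applying $\rho$ places $\sigma$ in $\mathrm{span}_{\KK}\rho(\mathbf{S}')$.

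The first technical ingredient is a column-wise reduction. A syzygy $\sigma\in\Syz(\rho(\mathcal{G}))$ is supported on the positions $(\{h\}\cup\{m+1,\ldots,m+r\},J)$ with $h\in[m]$ and $J\in\binom{[n]}{r+1}$; write $\sigma=\sum_{h,J}a_{h,J}E_{\{h\}\cup\{m+1,\ldots,m+r\},J}$ with $a_{h,J}\in\KK[x_1,\ldots,x_K]_1$. Expanding each minor of $\rho(D)$ in $\mathcal{G}$ along its unique row coming from $M_{\vec{x}}$ gives $\sum_{j\in J}\pm\,m_{hj}(\vec{x})\,\Delta_{J\setminus\{j\}}$, where $\Delta_{K'}=|C_{[r],K'}|$ for $K'\in\binom{[n]}{r}$; since these $r\times r$ minors lie in $\KK[c_{ij}]$ and are $\KK$-linearly independent, they are linearly independent over $\KK[x_1,\ldots,x_K]$. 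Hence the single syzygy relation splits into one relation $\sum_{h\in[m]}\sum_{j\in[n]\setminus K'}\pm\,a_{h,K'\cup\{j\}}\,m_{hj}(\vec{x})=0$ for each $K'\in\binom{[n]}{r}$, and the crucial observation is that the $K'$-th relation involves only the $m(n-r)$ linear forms $m_{hj}(\vec{x})$ with $j\notin K'$. The identical bookkeeping holds in the generic ring $R$, so building a lift $T\in\Syz(\mathcal{F})_{r+2}$ of $\sigma$, supported on the $\mathcal{G}$-positions, is the same as choosing linear forms $\tilde a_{h,J}\in\KK[y_{kl}]_1$ with $\rho(\tilde a_{h,J})=a_{h,J}$ and $\sum_{h}\sum_{j\notin K'}\pm\,\tilde a_{h,K'\cup\{j\}}\,y_{hj}=0$ in $\KK[y_{kl}]_2$ for every $K'$.

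The existence of such a lift is the main obstacle, and it is here that the hypotheses $K\geq m(n-r)$ and genericity of $M_1,\ldots,M_K$ are used. For small $K$ a lift need not exist: specialising $y_{kl}\mapsto m_{kl}(\vec{x})$ imposes linear relations among the entries and can create linear syzygies of $\rho(\mathcal{G})$ with no counterpart upstairs. For generic $M_i$ and $K\geq m(n-r)$, however, for every $K'\in\binom{[n]}{r}$ the $m(n-r)$ forms $\{m_{hj}(\vec{x}):h\in[m],\,j\notin K'\}$ are $\KK$-linearly independent, so $\rho$ restricts to an injection on $\mathrm{span}_{\KK}\{y_{hj}:j\notin K'\}$, and in fact on the subring these variables generate; combining this column-by-column injectivity with the splitting above one shows the relations cutting out $\ker\rho$ can be absorbed into a suitable choice of the $\tilde a_{h,J}$, making $T$ a genuine syzygy. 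Alternatively one can argue by upper semicontinuity of $\dim_{\KK}\Syz(\rho(\mathcal{F}))_{r+2}$ in the parameters $M_1,\ldots,M_K$, reducing to a single generic instance and then matching $\dim_{\KK}\bigl(\Syz(\rho(\mathcal{G}))_{r+2}\cap\KK[x_1,\ldots,x_K]^{m\binom{n}{r+1}}\bigr)$ with $\dim_{\KK}\rho(\mathbf{U})_{r+2}$; the column-wise count shows that these agree exactly from $K=m(n-r)$ on, and proving that this threshold is sharp — that no spurious linear $x$-syzygy survives once $K\geq m(n-r)$ — is the delicate part.

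Granting the lift, the conclusion is immediate. As $T\in\Syz(\mathcal{F})$ and $\rho(T)=\sigma\in\Syz(\rho(\mathcal{G}))\setminus\{0\}$, Theorem~\ref{thm:supp} gives $T\in\Syz(\mathcal{G})$. Because $\rho$ fixes every $c_{ij}$, an entry of $T$ involves a $c$-variable precisely when the corresponding entry of $\sigma=\rho(T)$ does; since $\sigma$ has entries in $\KK[x_1,\ldots,x_K]$, the entries of $T$ lie in $\KK[y_{kl}]$, so $T\in\Syz(\mathcal{G})\cap\KK[y_{kl}]^{m\binom{n}{r+1}}=\mathbf{U}$, and $T$ is homogeneous of degree $r+2$, hence $T\in\mathbf{U}_{r+2}$. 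Theorem~\ref{thm:U} then expresses $T$ as a $\KK$-linear combination of $\mathbf{S}'$, and applying $\rho$ gives $\sigma=\rho(T)\in\mathrm{span}_{\KK}\rho(\mathbf{S}')$. This proves the inclusion, and with it that $\rho(\mathbf{S}')$ generates $\rho(\mathbf{U})_{r+2}$.
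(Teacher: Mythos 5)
Your endgame coincides with the paper's: once a nonzero $\sigma\in\Syz(\rho(\mathcal{G}))_{r+2}\cap\KK[x_1,\ldots,x_K]^{m\binom{n}{r+1}}$ has been lifted to some $T\in\Syz(\mathcal{F})$, Theorem~\ref{thm:supp} puts $T$ in $\Syz(\mathcal{G})$, the fact that $\rho$ fixes the $c$-variables puts $T$ in $\mathbf{U}_{r+2}$, and Theorem~\ref{thm:U} finishes. The genuine gap is the existence of the lift, which is exactly where the hypotheses $K\geq m(n-r)$ and genericity must enter, and which you do not establish. Your column-wise splitting of the syzygy condition into one relation per $K'\in\binom{[n]}{r}$ is a correct reduction (the maximal minors of $C$ are $\KK$-linearly independent and the $c$-variables are algebraically independent from the $x$-variables), but it only reformulates the problem: the same coefficient $a_{h,J}$ occurs in the $r+1$ relations indexed by the subsets $K'\subseteq J$ of cardinality $r$, so a preimage $\tilde a_{h,J}$ must be chosen consistently for all of them at once. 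The assertion that the relations cutting out $\ker\rho$ ``can be absorbed into a suitable choice of the $\tilde a_{h,J}$'' is precisely the statement to be proved; it does not follow from the injectivity of $\rho$ on $\mathrm{span}_{\KK}\{y_{hj}: j\notin K'\}$ for each fixed $K'$. Your fallback via semicontinuity likewise presupposes the equality of $\dim_{\KK}\bigl(\Syz(\rho(\mathcal{G}))_{r+2}\cap\KK[x_1,\ldots,x_K]^{m\binom{n}{r+1}}\bigr)$ with $\dim_{\KK}\rho(\mathbf{U})_{r+2}$, which is the content of the theorem. You flag this as ``the delicate part'' and leave it open.

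The paper closes this step by a different and much shorter mechanism: the ideal $(\mathcal{F})$ of $(r+1)$-minors of the generic matrix $D$ is perfect, so by the specialization theorem for resolutions of perfect ideals (Theorem~3.5 of the reference \texttt{BV88}) its minimal free resolution remains a resolution after tensoring with $P$ via $\rho$, provided $\grade(\rho(\mathcal{F}))=\grade(\mathcal{F})$ --- which holds for generic $M_1,\ldots,M_K$ and $K$ in the stated range. This yields $\Syz(\rho(\mathcal{F}))=\rho(\Syz(\mathcal{F}))$ in one stroke; in particular every element of $\Syz(\rho(\mathcal{G}))\subseteq\Syz(\rho(\mathcal{F}))$ lifts to some $T\in\Syz(\mathcal{F})$, not a priori supported on the $\mathcal{G}$-positions --- which is exactly why Theorem~\ref{thm:supp} is then needed. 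If you wish to avoid this commutative-algebra input, you must actually carry out the consistency argument across the various $K'$; as written, your proof assumes its hardest step.
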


\begin{proof}
For $K\geq (m-1)(n-r-1)$ and generic $M_1,\ldots,M_K$, one has that $\grade(\rho(\mathcal{F}))=\grade((\mathcal{F}))$. Since the ideal $(\mathcal{F})\subseteq R$ is a perfect $R$-module, a minimal free $R$-resolution of $(\mathcal{F})$ specializes to a minimal free $P$-resolution of $(\rho(\mathcal{F}))$ by~\cite[Theorem 3.5]{BV88} (as tensoring with $P$ over $R$ corresponds to specializing via $\rho$).
In particular, $$ \Syz(\rho(\mathcal{F}))=\rho(\Syz(\mathcal{F})).$$ 

Since $\rho$ is a homomorphism, then $\rho(\Syz(\mathcal{G}))\subseteq\Syz(\rho(\mathcal{G}))$. Conversely, let  $0\neq S\in\Syz(\rho(\mathcal{G}))\subseteq\Syz(\rho(\mathcal{F}))=\rho(\Syz(\mathcal{F}))$. Then there is $T\in\Syz(\mathcal{F})$ such that $\rho(T)=S\in\Syz(\rho(\mathcal{G}))$. By Theorem \ref{thm:supp}, $T\in\Syz(\mathcal{G})$. This proves that 
$$\rho(\Syz(\mathcal{G}))=\Syz(\rho(\mathcal{G})).$$

Finally, let $0\neq S\in\Syz(\rho(\mathcal{G}))_{r+2}\cap\mathbb{K}[x_1,\ldots,x_K]^{m\binom{n}{r+1}}$ and let $T\in\Syz(\mathcal{G})$ be such that $\rho(T)=S$. Since $\rho$ is the identity on the $c$-variables and maps the $y$-variables into linear forms in the $x$-variables and $S\in\mathbb{K}[x_1,\ldots,x_K]^{m\binom{n}{r+1}}$, then $T\in\mathbb{K}[y_{kl}: k\in[m], l\in[n]]^{m\binom{n}{r+1}}$. Therefore $T\in\Syz(\mathcal{G})\cap\mathbb{K}[y_{kl}: k\in[m], l\in[n]]^{m\binom{n}{r+1}}=\mathbf{U}$. This proves that $$\Syz(\rho(\mathcal{G}))_{r+2} \cap\mathbb{K}[x_1,\ldots,x_K]^{m\binom{n}{r+1}}=\rho(\mathbf{U})_{r+2}= \langle\rho(\mathbf{S}^\prime) \rangle,$$
where the last equality follows from Theorem~\ref{thm:U}.
\end{proof}

\subsection{Submaximal minors}

In this subsection, we discuss the special case of maximal minors, that is, the case when $r=n-1$. For the convenience of the reader, we start by recalling a results from commutative algebra, which we use in the sequel. Then we apply it to our situation and we use it to compute the dimension of the module $\rho(U)$ in every degree.

In~\cite{AS81}, Andrade and Simis give a free resolution of the ideal generated by the ideal of maximal minors of an $n\times\ell$ matrix, which involve a fixed set of $n-1$ columns.
Their free resolution is a modification of the Buchsbaum-Rim complex, a well-studied complex in commutative algebra.   
  
\begin{theorem}[{\cite[Theorem]{AS81}}]\label{thm:free:res}
Let $R$ be a Noetherian ring and let $f:F\rightarrow G$ be an $R$-homomorphism  of free modules, with $\mathrm{rank}(F)=\ell \geq \mathrm{rank}(G)=n$. Let $F=F' \oplus F''$ be a decomposition of $F$ as direct sum of free modules, with $\mathrm{rank}(F')=n-1$. Set $f':F'\rightarrow G$ for the restriction map. The following are equivalent:
\begin{enumerate}
\item[$(i)$] $\mathrm{grade}(I_n)(f)\geq \ell-n+1$ and $\mathrm{grade}(I_{n-1}(f'))\geq 2$,
\item[$(ii)$] the complex 
\begin{multline} \label{compl:G}
0 \rightarrow \bigwedge^\ell F \otimes S_{\ell-n-1}(G^{*}) \xrightarrow{d_{\ell-n+1}} \cdots \\ \xrightarrow{d_3} \bigwedge^{n+1} F \otimes S_{0}(G^{*}) \rightarrow F'' \xrightarrow{\varphi} \mathrm{Im}(f)/\mathrm{Im}(f') \rightarrow 0 
\end{multline}
is exact and $\mathrm{Im}(f)/\mathrm{Im}(f') \cong (\bigwedge^n f)(F'')$, where $F''$ sits naturally inside $\bigwedge^n F = \bigwedge^n (F' \oplus F'')=\bigoplus_{i=0}^n (\bigwedge^i F' \otimes \bigwedge^{n-1} F'')$ as $\bigwedge^{n-1} F' \otimes \bigwedge^1 F''$.
\end{enumerate}
\end{theorem}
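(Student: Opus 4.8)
The plan is to obtain the complex of $(ii)$, call it $\mathcal{C}$, as a modification of the Buchsbaum--Rim complex $\mathcal{BR}(f)$ of $f$ that alters only its right-hand end, and then to deduce exactness of $\mathcal{C}$ from acyclicity of $\mathcal{BR}(f)$. Recall that
\[
\mathcal{BR}(f)\colon\quad 0 \to \bigwedge^{\ell} F \otimes S_{\ell-n-1}(G^*) \xrightarrow{d_{\ell-n+1}} \cdots \xrightarrow{d_3} \bigwedge^{n+1} F \xrightarrow{\partial} F \xrightarrow{f} G
\]
resolves $\mathrm{coker}(f)$ exactly when $\mathrm{grade}(I_n(f)) \geq \ell-n+1$. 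Let $\pi''\colon F \to F''$ be the projection, put $\psi := \pi'' \circ \partial$, and let $\varphi\colon F'' \to \mathrm{Im}(f)/\mathrm{Im}(f')$ be the map induced by $f$; since $\partial$ lands in $\ker f$ one gets $\varphi \circ \psi = 0$, and $\psi \circ d_3 = 0$ is clear, so replacing the tail $\bigwedge^{n+1}F \xrightarrow{\partial} F \xrightarrow{f} G$ of $\mathcal{BR}(f)$ by $\bigwedge^{n+1}F \xrightarrow{\psi} F'' \xrightarrow{\varphi} \mathrm{Im}(f)/\mathrm{Im}(f')$ produces a complex $\mathcal{C}$ that agrees with $\mathcal{BR}(f)$ in homological degrees $\geq 2$.

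For $(i)\Rightarrow(ii)$: the bound $\mathrm{grade}(I_{n-1}(f'))\geq 2$ gives in particular that $f'$ is injective, i.e.\ $\ker f \cap F' = 0$, while $\mathrm{grade}(I_n(f))\geq\ell-n+1$ makes $\mathcal{BR}(f)$ acyclic, so $\mathrm{Im}(\partial)=\ker f$ and $\ker\partial=\mathrm{Im}(d_3)$. Exactness of $\mathcal{C}$ at the modules $\bigwedge^{n+j}F\otimes S_{j-2}(G^*)$, $j\geq 3$, is literally the exactness of $\mathcal{BR}(f)$ there; at $\bigwedge^{n+1}F$ one has $\ker\psi = \partial^{-1}(F'\cap\mathrm{Im}\,\partial) = \partial^{-1}(F'\cap\ker f) = \partial^{-1}(0) = \ker\partial = \mathrm{Im}(d_3)$; at $F''$ one has $\ker\varphi = \pi''(\ker f) = \pi''(\mathrm{Im}\,\partial) = \mathrm{Im}\,\psi$; and $\varphi$ is onto because $\mathrm{Im}(f) = \mathrm{Im}(f') + f(F'')$. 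Hence $\mathcal{C}$ is exact. For the displayed isomorphism I would introduce $\mu\colon F'' \to R$, $u \mapsto (\bigwedge^{n} f)\big((e_1'\wedge\cdots\wedge e_{n-1}')\wedge u\big)$ for a basis $e_1',\ldots,e_{n-1}'$ of $F'$: a repeated-column computation gives $\ker\varphi\subseteq\ker\mu$, while $\mathrm{grade}(I_{n-1}(f'))\geq 2$ forces the associated primes of $\mathrm{Im}(f)/\mathrm{Im}(f')\subseteq\mathrm{coker}(f')$ to avoid $I_{n-1}(f')$, and localizing there $f'$ splits with free rank-one cokernel, giving $\ker\mu\subseteq\ker\varphi$; so $\mu$ factors through an injection $\mathrm{Im}(f)/\mathrm{Im}(f')\hookrightarrow R$ with image $(\bigwedge^{n} f)(F'')$, which under $F''\cong\bigwedge^{n-1}F'\otimes\bigwedge^{1}F''\subseteq\bigwedge^{n}F$ is exactly the stated embedding.

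For $(ii)\Rightarrow(i)$: localize $\mathcal{C}$ and apply the Buchsbaum--Eisenbud acyclicity criterion to its free part $0\to\bigwedge^{\ell}F\otimes S_{\ell-n-1}(G^*)\to\cdots\to\bigwedge^{n+1}F\to F''$; the resulting grade bounds on the ideals of maximal-size minors of the differentials, which are radical-equivalent to $I_n(f)$, give $\mathrm{grade}(I_n(f))\geq\ell-n+1$, while the isomorphism in $(ii)$ presents $\mathrm{Im}(f)/\mathrm{Im}(f')$ as an ideal of $R$, hence torsion-free, which — via $0\to\mathrm{Im}(f)/\mathrm{Im}(f')\to\mathrm{coker}(f')\to\mathrm{coker}(f)\to 0$ and Fitting-ideal considerations for $\mathrm{coker}(f')$ — yields $\mathrm{grade}(I_{n-1}(f'))\geq 2$.

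I expect the main obstacle to be the bookkeeping showing that the single substitution $\partial\leadsto\pi''\circ\partial$ does not destroy acyclicity higher up — the crux being the identity $\ker(\pi''\circ\partial)=\ker\partial$, which rests on $\ker f\cap F'=0$ — and, relatedly, verifying that the cokernel of the new complex is \emph{exactly} $\mathrm{Im}(f)/\mathrm{Im}(f')$ and that it is isomorphic to an honest ideal: this last point is where the full hypothesis $\mathrm{grade}(I_{n-1}(f'))\geq 2$ is used, $\mathrm{grade}\geq 1$ being already enough for plain exactness of $\mathcal{C}$.
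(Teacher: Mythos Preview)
The paper does not prove this theorem: it is quoted verbatim from Andrade--Simis~\cite{AS81} and only recalled ``for the convenience of the reader'', with the one-line remark that ``their free resolution is a modification of the Buchsbaum--Rim complex''. There is therefore no proof in the paper to compare your attempt against.

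That said, your outline is precisely the modification the paper alludes to and is the argument of~\cite{AS81}: keep the Buchsbaum--Rim complex $\mathcal{BR}(f)$ intact in homological degrees $\geq 2$, replace the tail $\bigwedge^{n+1}F\xrightarrow{\partial}F\xrightarrow{f}G$ by $\bigwedge^{n+1}F\xrightarrow{\pi''\circ\partial}F''\xrightarrow{\varphi}\mathrm{Im}(f)/\mathrm{Im}(f')$, and read off exactness from acyclicity of $\mathcal{BR}(f)$ together with $\ker f\cap F'=0$. Your identification of $\mathrm{Im}(f)/\mathrm{Im}(f')$ with the ideal $(\bigwedge^{n}f)(F'')$ via the map $g\mapsto f(e_1')\wedge\cdots\wedge f(e_{n-1}')\wedge g$ is the standard Hilbert--Burch-type argument, and you correctly isolate $\mathrm{grade}(I_{n-1}(f'))\geq 2$ as the hypothesis needed for injectivity of $\mathrm{coker}(f')\to R$ (grade $\geq 1$ already suffices for exactness of the complex itself, as you note). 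The converse via Buchsbaum--Eisenbud is also the expected route. So your proposal is sound and in line with the cited source; there is simply nothing in the present paper to set it against.
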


Thanks to the identification of $F''$ with $\bigwedge^{n-1} F' \otimes \bigwedge^1 F''$,  $(\bigwedge^n f)(F'') \subseteq R$ is the ideal of $R$ generated by the maximal minors which involve the $n-1$ columns corresponding to $F'$ in the matrix representing $f$. In fact, the image via $\varphi$ of the element $e_i$ of the standard basis of $F''$ is the maximal minor of the matrix representing $f:F \rightarrow G$ which involves the $n-1$ columns corresponding to $F'$ and the $i$-th column among those that correspond to $F''$. In the following example, we illustrate how the identification works.

\begin{example}
Let $R=\KK[x,y,z]$, $F=R^4$, and $G=R^3$ with bases $\{e_1,e_2,e_3,e_4\}$ and $\{f_1,f_2,f_3\}$, respectively. Let $f: R^4 \rightarrow R^3$ be represented by the matrix 
\begin{equation*}
\begin{pmatrix}
x & 0 & 0 & yz \\
0 & y & 0 & x \\
0 & 0 & z & y^2
\end{pmatrix}.
\end{equation*}
Setting $F'=\langle e_3, e_4 \rangle$ and $F''=\langle e_1, e_2 \rangle$, the columns corresponding to $F'$ are the last two. Thanks to the isomorphisms $$F'' \cong \bigwedge^2 F' \otimes \bigwedge^1 F'' \cong \langle e_1 \wedge e_3 \wedge e_4 , e_2 \wedge e_3 \wedge e_4 \rangle \subseteq \bigwedge^3 R^4,$$ one has that  $\bigwedge^3f (F'') = (x^2 z, y^2z^2)\subseteq R$, that is, the ideal generated by the $3$-minors of $M$ which involve the last two columns.
\end{example} 

We now apply Theorem~\ref{thm:free:res} to our situation. Let $r=n-1$ and let $P=\KK[x_{\ell},c_{ij}: \ell\in[K], i\in[n-1], j\in[n]]$. Let $f:P^{m+r} \rightarrow P^n$ be the $P$-module homomorphism represented by the $n \times (m+r)$ matrix
$\begin{pmatrix}
M_x \lvert C
\end{pmatrix}$.
The polynomial ring $P$ can be given a standard $\ZZ^2$-grading by setting $\deg(x_{\ell})=(1,0) \in \ZZ^2$ and $\deg(c_{ij})=(0,1) \in \ZZ^2$. 
Let $F=P(-(1,0))^{m} \oplus P(-(0,1))^{n-1}$ be a graded free $P$-module with decomposition $F' \oplus F''$, where $F'=P(-(0,1))^{n-1}$ and $F''=P(-(1,0))^{m}$, and let $G=P^n$. Then $$\left(\bigwedge^n f\right)(F'') = (\rho(\mathcal{G})),$$
where $\rho(\mathcal{G})$ is the set of maximal minors of $\begin{pmatrix}
M_x \lvert C
\end{pmatrix}$ which involve all the columns of $C$.

\begin{corollary}\label{lemma:free:res:Syz:G}
Let $\rho(\mathcal{G})$ be the set of maximal minors of $\begin{pmatrix}
M_x \lvert C
\end{pmatrix}$ which involve all the columns of $C$.
For $K\geq m$ and a generic choice of $M_1,\ldots,M_K$, the complex 
\begin{equation} \label{compl:graded:syzG}
0 \rightarrow \bigwedge^{m+n-1} F \otimes S_{m-2}(G^{*}) \xrightarrow{d_{m-2}} \cdots \xrightarrow{d_3} \bigwedge^{n+1} F \otimes S_{0}(G^{*}) 
\end{equation}
is a graded free resolution of the $P$-module $\Syz(\rho(\mathcal{G}))(0,n-1)$.
\end{corollary}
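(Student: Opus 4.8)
The plan is to apply Theorem~\ref{thm:free:res} to the homomorphism $f:P^{m+n-1}\to P^n$ given by the matrix $\begin{pmatrix} M_x \lvert C\end{pmatrix}$, with the decomposition $F=F'\oplus F''$ fixed just before the statement, and then observe that the asserted complex~(\ref{compl:graded:syzG}) is nothing but the ``tail'' of the Andrade--Simis complex~(\ref{compl:G}), i.e. the part lying strictly to the left of the map $\varphi:F''\to\mathrm{Im}(f)/\mathrm{Im}(f')$. First I would verify condition $(i)$ of Theorem~\ref{thm:free:res} in our setting: since $\ell=m+n-1$ and $n$ is the rank of $G$, the requirement $\mathrm{grade}(I_n(f))\ge \ell-n+1=m$ and $\mathrm{grade}(I_{n-1}(f'))\ge 2$ must be checked for $K\ge m$ and generic $M_1,\dots,M_K$. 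Here $I_{n-1}(f')$ is the ideal of $(n-1)$-minors of the $n\times(n-1)$ submatrix $C$, whose entries are distinct variables, so it is a determinantal ideal of generic type and has grade $2$ (indeed its grade is $(n-(n-1)+1)(n-(n-1)+1)-\text{something}$; more simply, by the classical Eagon--Northcott bound it equals $2$, and this part is independent of the $M_i$). The ideal $I_n(f)$ is the ideal of maximal minors of the $n\times(m+n-1)$ matrix $\begin{pmatrix}M_x\lvert C\end{pmatrix}$, and a genericity argument—analogous to the grade computation invoked in the proof of Theorem~\ref{thm:main} via~\cite{BV88}—shows that for $K\ge m$ and generic choice of the $M_\ell$ this grade attains the maximal value $m=(m+n-1)-n+1$. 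So I would spell out that the generic lower bound on $K$ needed here is exactly $K\ge m$.

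Once $(i)$ holds, Theorem~\ref{thm:free:res}$(ii)$ gives that the full complex~(\ref{compl:G}),
$$0 \rightarrow \bigwedge^{m+n-1} F \otimes S_{m-2}(G^{*}) \xrightarrow{d_{m-1}} \cdots \xrightarrow{d_3} \bigwedge^{n+1} F \otimes S_{0}(G^{*}) \xrightarrow{d_2} F'' \xrightarrow{\varphi} \mathrm{Im}(f)/\mathrm{Im}(f') \rightarrow 0,$$
is exact, and that $\mathrm{Im}(f)/\mathrm{Im}(f')\cong(\bigwedge^n f)(F'')=(\rho(\mathcal{G}))$ via $\varphi$. Exactness of this complex says precisely that $\ker\varphi=\mathrm{Im}(d_2)$ and that the truncated complex~(\ref{compl:graded:syzG}) is a free resolution of $\ker\varphi$. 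The remaining point is to identify $\ker\varphi$, with its grading, with $\Syz(\rho(\mathcal{G}))(0,n-1)$. Since $\varphi$ sends the $i$-th basis vector of $F''$ to the maximal minor of $\begin{pmatrix}M_x\lvert C\end{pmatrix}$ using the $n-1$ columns of $C$ and the $i$-th column of $M_x$, the map $\varphi:F''\to(\rho(\mathcal{G}))$ is exactly the map $\varphi$ (restriction of $\phi$) from Section~\ref{sect:prelim} after specialization: it reads off the coefficients of a syzygy against the generating set $\rho(\mathcal{G})$. Hence $\ker\varphi=\Syz(\rho(\mathcal{G}))$ as modules. For the grading: $F''=P(-(1,0))^m$, so each basis vector of $F''$ sits in degree $(1,0)$, whereas the generators of $\rho(\mathcal{G})$ themselves have degree $(1,n-1)$ (one $x$-linear form from $M_x$ times $n-1$ entries of $C$). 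To make $\varphi$ degree-preserving onto the ideal one must twist the source, and unwinding the shifts shows the kernel, computed inside the untwisted $F''=P(-(1,0))^m$, equals the module of syzygies twisted by $(0,n-1)$, i.e. $\Syz(\rho(\mathcal{G}))(0,n-1)$; I would present this bookkeeping explicitly using $\deg(x_\ell)=(1,0)$ and $\deg(c_{ij})=(0,1)$.

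I expect the main obstacle to be the genericity/grade verification of hypothesis $(i)$—specifically, pinning down that $\mathrm{grade}(I_n(f))\ge m$ holds for $K\ge m$ and generic $M_\ell$, and confirming the precise threshold $K\ge m$ rather than something larger. The cleanest route is probably to note that when the entries of $M_x$ are replaced by distinct variables $y_{kj}$ the matrix $\begin{pmatrix}Y\lvert C\end{pmatrix}$ is $n\times(m+n-1)$ with generic entries, so $I_n$ of it is a generic determinantal ideal of the expected (maximal) grade $m=(m+n-1)-n+1$; then one argues, exactly as in the proof of Theorem~\ref{thm:main} using~\cite[Theorem 3.5]{BV88} and the perfection of generic determinantal ideals, that for $K\ge m$ a generic specialization $y_{kj}\mapsto m_{kj}(\vec x)$ preserves the grade (the bound $K\ge m$ being what guarantees the $x$-linear forms span enough of the space of possible first rows). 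The grade bound on $I_{n-1}(f')$ is immediate since $C$ has generic entries. The rest—identifying $\ker\varphi$ and tracking the two-variable grading shift—is essentially bookkeeping, though one has to be careful with the direction of the twist.
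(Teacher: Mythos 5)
Your proposal is correct and follows the same approach as the paper's proof: verify the grade hypotheses of Theorem~\ref{thm:free:res} for $K\ge m$ and generic $M_\ell$, apply that theorem to conclude exactness of the Andrade--Simis complex~\eqref{compl:G} with $\mathrm{Im}(f)/\mathrm{Im}(f')\cong(\rho(\mathcal{G}))$, and then read off the truncation~\eqref{compl:graded:syzG} as a resolution of $\ker\varphi$, tracking the $(0,n-1)$ shift that makes $\varphi$ homogeneous. The paper invokes \cite[Theorem~2.5]{BV88} directly for the grade of $I_n(f)$ and $I_{n-1}(f')$, while you route through the generic-variable matrix and a specialization argument; both verify the same hypothesis, and the rest of the argument is identical.
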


\begin{proof}
Notice that the choice of grading makes the function $f$ and all the maps $d_i$ in~\eqref{compl:G} homogeneous. In order to make the function $\varphi: P(-(1,0))^{m} \rightarrow (\rho(\mathcal{G}))$ homogeneous, one needs to consider the shifted ideal $(\rho(\mathcal{G}))(0,n-1)$ instead of $(\rho(\mathcal{G}))$. 

By~\cite[Theorem 2.5]{BV88} and by the genericity of the choice of $M_1,\ldots,M_K$, $\mathrm{grade}(I_n(f))=m$ and $\mathrm{grade}(I_{n-1}(f'))=2$ . Then by Theorem~\ref{thm:free:res} the complex 
\begin{equation} \label{compl:graded:G}
0 \rightarrow \bigwedge^{m+n-1} F \otimes S_{m-2}(G^{*}) \xrightarrow{d_{m}} \cdots \xrightarrow{d_3} \bigwedge^{n+1} F \otimes S_{0}(G^{*}) \rightarrow F'' 
\end{equation}
is a graded free resolution of the shifted ideal $(\rho(\mathcal{G}))(0,n-1)$ with augmentation map $\varphi$.  
The thesis now follows from the exactness of~\eqref{compl:graded:G}, since $\ker(\varphi)=\Syz(\rho(\mathcal{G}))(0,n-1)$.
\end{proof}

Finally, in the next theorem we compute the dimension as $\KK$-vector space of the space of syzygies $\rho(\mathbf{U})_{n-1+b}$.

\begin{theorem} \label{thm:dim:submax} 
For $K\geq m+1 - (n-1)n$, generic $M_1,\ldots,M_K$, and any $b>0$
$$\dim_{\KK}(\rho(\mathbf{U})_{n-1+b})=\sum_{i=1}^{\min\{m-n,n+1,b-n\}} (-1)^{i-1} \binom{m}{n+i}  \binom{n}{i-1} \binom{K + b-n -i -1}{K-1}.$$
\end{theorem}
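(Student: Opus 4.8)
The plan is to extract the dimension of $\rho(\mathbf{U})_{n-1+b}$ directly from the graded free resolution of $\Syz(\rho(\mathcal{G}))(0,n-1)$ provided by Corollary~\ref{lemma:free:res:Syz:G}, combined with the fact (from Theorem~\ref{thm:main}, applied here in the submaximal case $r=n-1$) that $\rho(\mathbf{U})$ coincides with the $x$-only part of $\Syz(\rho(\mathcal{G}))$, at least in the degree range we care about. First I would compute the $\ZZ^2$-graded (or at least the coarsely graded, in total degree) Hilbert function of $\Syz(\rho(\mathcal{G}))$ from the alternating sum over the terms $\bigwedge^{n+j} F \otimes S_{j-1}(G^*)$ of the complex~\eqref{compl:graded:syzG}, using that these free modules are explicit: $G^*=P^n$ so $S_{j-1}(G^*)$ is free of rank $\binom{n+j-2}{n-1}$, and $\bigwedge^{n+j}F$ decomposes via $F=F'\oplus F''$ with $F'=P(-(0,1))^{n-1}$, $F''=P(-(1,0))^m$ as $\bigoplus_{i} \bigwedge^{i} F' \otimes \bigwedge^{n+j-i} F''$, whose summands are $P(-(i\cdot(0,1)+(n+j-i)(1,0)))^{\binom{n-1}{i}\binom{m}{n+j-i}}$.

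Next I would isolate the part of this Hilbert function living in $\KK[x_1,\dots,x_K]$, i.e. the component of $\ZZ^2$-degree $(d,0)$; since the resolution is $\ZZ^2$-graded, this component is again computed by the alternating sum restricted to those basis elements of the free modules whose $c$-degree vanishes. A basis element of $\bigwedge^{i}F'\otimes\bigwedge^{n+j-i}F''\otimes S_{j-1}(G^*)$ carries internal degree $(n+j-i,\,i)$ plus the degree of a monomial in $P$; for the total to be $(d,0)$ we need $i=0$ and the monomial to involve only $x$-variables. So the $x$-only part of the resolution term in homological position $j$ is $\bigwedge^{n+j}F''\otimes S_{j-1}(\KK[x]^n)$, free over $\KK[x_1,\dots,x_K]$ of rank $\binom{m}{n+j}\binom{n+j-2}{n-1}$ and generated in $x$-degree $n+j$. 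Taking the alternating sum of the Hilbert series of these (reindexing $i=j$, so homological position $j$ contributes with sign $(-1)^{j-1}$ as it computes the $(j-1)$-th syzygy, and accounting for the shift $(0,n-1)$ which does not affect the $x$-degree), evaluating the Hilbert function in $x$-degree $b$ and simplifying the binomial sum with the standard identity $\dim_\KK \KK[x_1,\dots,x_K]_{b-n-j} = \binom{K+b-n-j-1}{K-1}$ should yield exactly the claimed formula, with the upper summation limit $\min\{m-n,n+1,b-n\}$ coming from: $n+i\le m$ (so $\bigwedge^{n+i}F''\ne 0$), $i-1\le n$ from $S_{i-1}$ of an $n$-dimensional space being nonzero only... actually from $\binom{n}{i-1}\ne 0$ forcing $i\le n+1$, and $b-n-i\ge 0$ for the polynomial degree to be nonnegative.

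The main subtlety — and the step I expect to be the real obstacle — is justifying that in every degree $n-1+b$ the space $\rho(\mathbf{U})_{n-1+b}$ is the \emph{entire} $x$-only part of $\Syz(\rho(\mathcal{G}))_{n-1+b}$, not just in degree $r+2=n+1$ as in Theorem~\ref{thm:main}. For this I would argue as in the proof of Theorem~\ref{thm:main}: since $K\ge m+1-(n-1)n$ is large enough (and in particular $K\ge m$, so Corollary~\ref{lemma:free:res:Syz:G} applies, and $K\ge(m-1)(n-r-1)=0$ trivially), one has $\Syz(\rho(\mathcal{F}))=\rho(\Syz(\mathcal{F}))$ and then, via Theorem~\ref{thm:supp}, $\Syz(\rho(\mathcal{G}))=\rho(\Syz(\mathcal{G}))$ in all degrees; pulling back an $x$-only syzygy of $\rho(\mathcal{G})$ along $\rho$ (which is the identity on $c$-variables and sends $y$-variables to linear forms in $x$) forces the preimage to involve only $y$-variables, hence to lie in $\mathbf{U}$. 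Thus $\rho(\mathbf{U})_{n-1+b}=\Syz(\rho(\mathcal{G}))_{n-1+b}\cap\KK[x_1,\dots,x_K]^{m\binom{n}{r+1}}$ for every $b$, and its dimension is the one computed from the resolution. I would double-check the genericity hypotheses line up: one needs $\grade(I_n(f))=m$ and $\grade(I_{n-1}(f'))=2$ (from~\cite[Theorem 2.5]{BV88}), which hold for generic $M_i$ once $K$ is at least the stated bound, and one needs the resolution of $(\mathcal{F})$ to specialize, which is exactly the content invoked in Theorem~\ref{thm:main}. The remaining work is then purely the binomial bookkeeping, which I would carry out by writing the Hilbert series as a rational function and reading off the coefficient, taking care with the off-by-one in the homological indexing of~\eqref{compl:graded:syzG} (the leftmost nonzero syzygy module sits in the position computing the $(m-2)$-th syzygy) so that the signs and the range of $i$ come out as stated.
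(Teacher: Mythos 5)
Your route matches the paper's: read the dimension off the $(b,0)$-bidegree piece of the resolution in Corollary~\ref{lemma:free:res:Syz:G}, after identifying $\rho(\mathbf{U})_{n-1+b}$ with the $x$-only part of $\Syz(\rho(\mathcal{G}))$. You are right that this identification needs an argument (the paper records it as a bare ``first notice''); the clean way is bidegree-wise — everything in sight is $\ZZ^2$-graded, $\mathbf{U}=\bigoplus_d\Syz(\mathcal{G})_{(d,n-1)}$ is a bigraded stratum, and $\Syz(\rho(\mathcal{G}))=\rho(\Syz(\mathcal{G}))$ then gives the equality degree by degree. Your own phrasing, that pulling back an $x$-only syzygy ``forces the preimage to involve only $y$-variables,'' is not literally true for an arbitrary preimage once the entry degree exceeds one, because $\rho$ restricted to $\KK[y]$ need not be injective; one must pass to the bihomogeneous component of the preimage.

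The genuine gap is in the rank bookkeeping, and you half-notice it without resolving it. You correctly compute the $x$-only part of position $j$ as $\bigwedge^{n+j}F''\otimes S_{j-1}(\KK^n)$, a free $\KK[x_1,\dots,x_K]$-module generated in $x$-degree $n+j$ of rank $\binom{m}{n+j}\dim_\KK S_{j-1}(\KK^n)=\binom{m}{n+j}\binom{n+j-2}{n-1}$. The claimed formula carries $\binom{m}{n+i}\binom{n}{i-1}$ instead, and $\binom{n+i-2}{n-1}\neq\binom{n}{i-1}$ as soon as $i\geq 3$; worse, $\binom{n+i-2}{n-1}$ never vanishes, so your derivation does not even produce the upper cutoff at $i=n+1$. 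You switch mid-sentence from deriving to matching (``actually from $\binom{n}{i-1}\ne 0$ forcing $i\le n+1$''), which is not a proof. Either the Andrade--Simis complex has a term at step $i$ of rank $\binom{n}{i-1}$ rather than $\dim_\KK S_{i-1}(\KK^n)$, or the stated formula is off; your proposal does not settle which, so the computation is incomplete. (The paper's own $\beta_{i,(-j,j-n-i)}$ carries the same $\binom{n}{i-1}$ while quoting $S_{i-1}(G^*)$, so this discrepancy is worth pinning down independently.) A smaller slip: $K\geq m+1-(n-1)n$ does \emph{not} imply $K\geq m$ for $n\geq 2$, so Corollary~\ref{lemma:free:res:Syz:G} cannot be invoked on that ground as stated; what the argument actually needs is that the weaker bound still forces $\grade(I_n(f))=m$ and $\grade(I_{n-1}(f'))=2$ for generic data, and that is what you should verify.
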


\begin{proof}
First notice that $$\rho(\mathbf{U})_{n-1+b}=\Syz(\rho(\mathcal{G}))_{(b,n-1)}=\Syz(\rho(\mathcal{G}))(0,n-1)_{(b,0)}.$$ In order to compute its dimension, we use the homogeneous component of degree $(b,0)$ of the graded free resolution~\eqref{compl:graded:syzG}
\begin{multline}\label{eqn:ressyz}
0 \rightarrow \left(\bigwedge^{m+n-1} F \otimes S_{m-2}(G^{*})\right)_{(b,0)} \xrightarrow{d_{m-2}} \cdots \\ \xrightarrow{d_3} \left(\bigwedge^{n+1} F \otimes S_{0}(G^{*}) \right)_{(b,0)}\rightarrow\Syz(\rho(\mathcal{G}))_{(b,n-1)}\rightarrow 0.
\end{multline}
It can be shown by direct computation that the module in the $i$-th homological position is $$\bigwedge^{n+i} F \otimes S_{i-1}(G^{*})\cong \bigoplus_{j=0}^{n+i}P(-j,j-n-i)^{\beta_{i,(-j,j-n-i)}}$$
where $\beta_{i,(-j,j-n-i)}={m\choose j}{n-1\choose n+i-j}{n\choose i-1}.$ Therefore
\begin{align*}
\dim_{\KK}\left(\bigwedge^{n+i} F \otimes S_{i-1}(G^{*})\right)_{(b,0)} &=\sum_{j=0}^{n+i} \beta_{i,(-j,j-n-i)} \dim_{\FF_q} P_{(b-j,j-n-i)} \\
&= \beta_{i,(-n-1,0)} \binom{K + b -n -i -1}{K-1} \\
&= \binom{m}{n+i}  \binom{n}{i-1} \binom{K + b-n -i -1}{K-1},
\end{align*}
where the second equality follows from observing that $\dim_{\KK} P_{(b-j,j-n-i)} \neq 0$ if and only if $j \geq n+i$.  

By the additivity of dimension on (\ref{eqn:ressyz}), one obtains
\begin{align*}
\dim(\rho(\mathbf{U})_{n-1+b}) & =
\dim_{\KK}\Syz(\rho(\mathcal{G}))_{(b,n-1)} \\ & =\sum_{i=1}^{\min\{m-n,n+1,b-n\}} (-1)^{i-1} \binom{m}{n+i}  \binom{n}{i-1} \binom{K + b-n -i -1}{K-1},
\end{align*}
since all three binomials are different from zero if and only if $i \leq \min\{m-n, n+1, b-n\}$. 
\end{proof}

\section{Complexity estimates and conclusions}\label{sect:estimates}

In this section, we apply the results from the previous sections to the study of the complexity of the SupportMinors Algorithm described in~\cite[Sections 5.2 and 5.3]{BBCGPSTV20}. The results contained in the previous sections allow us to make the estimates of~\cite[Sections 5.2 and 5.3]{BBCGPSTV20} rigorous in some cases of cryptographic interest.

Let $C_{[r],J}$ be the Pl\"ucker coordinates of the matrix $C$, i.e., for any $J\subseteq [n]$ of $|J|=r$, $C_{[r],J}$ corresponds to the maximal minor of $C$ of columns indexed by $J$. Let $T=\KK[x_1,\ldots,x_K,C_{[r],J}\mid J\subseteq[n], |J|=r]$. Notice that the Pl\"ucker coordinates are not distinct variables, but they satisfy homogeneous quadratic equations called the Pl\"ucker relations.
The SupportMinors Algorithm consists of solving system (\ref{eqn:sm}) in $x_1,\ldots,x_K$ and the Pl\"ucker coordinates $C_{[r],J}$. The system is solved via a linearization technique, by performing Gaussian eliminations in a Macaulay matrix whose rows correspond to the multiples of the original equations by the monomials in $x_1,\ldots,x_K$ of degree $b-1$, for a suitable $b$. In addition, one of the $C_{[r],J}$'s is set to one. This works in practice, as it is equivalent to assuming that the corresponding minor of $C$ is invertible (which is true with high probability over a sufficiently large field). Notice that, while the $C_{[r],J}$ are not distinct variables, they can be treated as such for the purpose of this algorithm, since the degree in $C_{[r],J}$ of the equations that we consider is at most one.
In order to bound the complexity of the algorithm, one needs to estimate the rank of the corresponding  Macaulay matrix, or equivalently, the dimension of the module generated by the elements of $\mathcal{\rho(G)}$ over $\KK[x_1,\ldots,x_K]$ for a given degree $b$ in $x_1,\ldots,x_K$ and degree one in the Pl\"ucker coordinates. In the sequel, we also say that $b$ is the degree in $x$ of the equations that we consider. 

\begin{notation}
Throughout this section, when we say that the entries of $M_{\bf x}$ are generic, we mean that the coefficients of the entries of $M_{\bf x}$ belong to the Zarisky-dense open set considered in Section~\ref{sect:spec}.
\end{notation}

The following theorem proves that the heuristic estimates of~\cite[Sections 5.2 and 5.3]{BBCGPSTV20} are correct in the case $b=1,2$.

\begin{theorem}\label{thm:cpxty}
Consider the SupportMinors Algorithm and assume that the entries of $M_{\bf x}$ are generic. Let $b$ denote the degree in $x$ of the equations that we consider. 
Then the number of linearly independent equations available for linearization for $b=1,2$ is as predicted in~\cite[Sections 5.2 and 5.3]{BBCGPSTV20}, namely it is 
$$\min\left\{ m{n\choose r+1}, K{n\choose r}\right\}$$ for $b=1$. For $b=2$ assume that $m{n\choose r+1}\leq K{n\choose r}$. Then
the number of linearly independent equations available for linearization is $$\min\left\{ Km{n\choose r+1}-{m+1\choose 2}{n\choose r+2}, {K+1\choose 2}{n\choose r}\right\}.$$ 
\end{theorem}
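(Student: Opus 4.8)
The plan is to count, for each $b\in\{1,2\}$, the dimension of the space of linearly independent equations available for linearization, which is the $\KK$-dimension of the degree-$(b,1)$ component of the module generated by $\rho(\mathcal{G})$ over $T=\KK[x_1,\ldots,x_K]$ (degree $b$ in the $x$'s, degree $1$ in the Pl\"ucker coordinates). Equivalently, by the rank--nullity relation, this dimension equals the number of generators times the number of degree-$(b-1)$ monomials in $x$, minus the dimension of the corresponding component of the syzygy module of $\rho(\mathcal{G})$. Since $|\rho(\mathcal{G})|=m\binom{n}{r+1}$ and there are $\binom{K+b-2}{b-1}$ monomials of degree $b-1$ in $K$ variables, the total count of rows of the Macaulay matrix is $m\binom{n}{r+1}\binom{K+b-2}{b-1}$, and the target quantity is
$$
\min\left\{ m\binom{n}{r+1}\binom{K+b-2}{b-1} - s_b,\ \text{(upper bound)}\right\},
$$
where $s_b=\dim_{\KK}\bigl(\text{module of relations in degree }(b,1)\bigr)$. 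The two competing expressions inside the $\min$ arise because the rank cannot exceed the number of rows, nor the number of columns; the column count is the number of monomials of degree $(b,1)$ in $x_1,\ldots,x_K$ and the Pl\"ucker coordinates with one Pl\"ucker coordinate set to one, which is $\binom{n}{r}\binom{K+b-1}{b}$.

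For $b=1$ the Macaulay matrix is just the coefficient matrix of the original system: it has $m\binom{n}{r+1}$ rows and $K\binom{n}{r}$ columns (monomials $x_\ell C_{[r],J}$), and no multiplication by $x$-monomials takes place, so $s_1=0$ unless there are $\KK$-linear relations among the equations themselves. For generic $M_i$ there are none as long as the number of rows does not exceed the number of columns, which gives exactly $\min\{m\binom{n}{r+1},\ K\binom{n}{r}\}$; when rows exceed columns the rank is capped at $K\binom{n}{r}$ and one must argue the matrix is full rank, which follows from genericity (the entries of $M_{\mathbf x}$ lie in the Zariski-dense set of Theorem~\ref{thm:main}, under which no spurious specialized syzygies appear in this degree). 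For $b=2$, under the hypothesis $m\binom{n}{r+1}\le K\binom{n}{r}$, the number of rows is $Km\binom{n}{r+1}$ and the number of columns is $\binom{K+1}{2}\binom{n}{r}$. The key input is Theorem~\ref{thm:main}: for $K\ge m(n-r)$ and generic $M_i$, the degree-$(2,1)$ syzygies of $\rho(\mathcal{G})$ that only involve $x$-variables are exactly $\rho(\mathbf{U})_{r+2}=\langle\rho(\mathbf{S}')\rangle$. I would then compute $\dim_{\KK}\rho(\mathbf{U})_{r+2}$ by identifying it with a graded component of $\Syz(\rho(\mathcal G))$ and, as in the proof of Theorem~\ref{thm:dim:submax} but now without the submaximal restriction, reading it off from the Eagon--Northcott-type resolution of $(\rho(\mathcal F))$ (whose specialization is exact by genericity, via~\cite[Theorem 3.5]{BV88}); the expected answer is $\binom{m+1}{2}\binom{n}{r+2}$, matching the subtracted term $\binom{m+1}{2}\binom{n}{r+2}$ in the statement after multiplying through. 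Subtracting gives $Km\binom{n}{r+1}-\binom{m+1}{2}\binom{n}{r+2}$ for the rank, capped at the column count $\binom{K+1}{2}\binom{n}{r}$, which yields the claimed $\min$.

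The main obstacle, and the step needing the most care, is the $b=2$ syzygy count: one must verify that $\rho(\mathbf{S}')$ is not only a generating set but that its elements are $\KK$-linearly independent after specialization (so that $\dim_\KK\rho(\mathbf U)_{r+2}=|\mathbf S'_1|+|\mathbf S'_3|$ up to the linear dependencies among $\mathbf S'$ itself), and that this number equals $\binom{m+1}{2}\binom{n}{r+2}$. Counting $|\mathbf S'_1|=m\binom{n}{r+2}$ and $|\mathbf S'_3|=\binom{m}{2}\binom{n}{r+2}$ gives $\bigl(m+\binom{m}{2}\bigr)\binom{n}{r+2}=\binom{m+1}{2}\binom{n}{r+2}$, so the combinatorics works out; what remains is to rule out specialized $\KK$-linear dependencies, which again follows from the fact that $\Syz(\rho(\mathcal G))=\rho(\Syz(\mathcal G))$ in the relevant degree together with the multigraded linear independence of $\mathbf S'$ established in Section~\ref{sect:generic}. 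A secondary point is checking that the two-sided rank bound (rows vs.\ columns) is tight in the generic case; this reduces to the genericity hypotheses already in force and the perfectness of $(\mathcal F)$, so no new idea is needed there.
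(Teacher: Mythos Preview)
Your proposal is correct and takes essentially the same approach as the paper: a rank--nullity count on the Macaulay matrix, with the column bound $\binom{K+b-1}{b}\binom{n}{r}$ and, for $b=2$, the identification of the syzygy space with $\rho(\mathbf U)_{r+2}$ via Theorem~\ref{thm:main}. Your explicit count $|\mathbf S'_1|+|\mathbf S'_3|=m\binom{n}{r+2}+\binom{m}{2}\binom{n}{r+2}=\binom{m+1}{2}\binom{n}{r+2}$, together with the linear-independence check after specialization, spells out a step the paper leaves implicit (it simply cites Theorem~\ref{thm:main} for the dimension without writing down the count); the Eagon--Northcott alternative you sketch is not used by the paper here and is not needed, since the direct count of $\mathbf S'$ together with the distinct-multidegree argument from Section~\ref{sect:generic} already does the job.
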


\begin{proof}
The first formula in the statement follows from observing that the cardinality of $\rho(\mathcal{G})$ is $m{n\choose r+1}$ and the number of Pl\"ucker coordinates of $C$ is ${n\choose r}$, hence $\dim(T_{(1,1)})=K{n\choose r}$. 

To prove the second formula, we need to estimate the dimension in bidegree $(2,1)$ of the vector space generated by $\rho(\mathcal{G})\KK[x_1,\ldots,x_K]_1$. Notice that, since the Pl\"ucker relations are homogeneous quadratic relations, we may treat the Pl\"ucker coordinates as independent variables. For a generic $M_{\bf x}$, the elements of $\rho(\mathcal{G})$ are linearly independent provided that $m{n\choose r+1}\leq K{n\choose r}$. In such a situation, $$\dim(\langle\rho(\mathcal{G})\rangle)=|\rho(\mathcal{G})|=m{n\choose r+1}.$$ 
Since $\dim(\KK[x_1,\ldots,x_K]_1)=K$, then the dimension of the vector space generated by $\rho(\mathcal{G})\KK[x_1,\ldots,x_K]_{1}\subseteq T_{(2,1)}$ is the minimum between the dimension of $T_{(2,1)}$ and $$Km{n\choose r+1}-\dim(\Syz(\rho(\mathcal{G}))_{r+2}\cap\mathbb{K}[x_1,\ldots,x_K]).$$
The thesis now follows since $\dim(T_{(2,1)})={K+1\choose 2}{n\choose r}$
and 
$$\dim(\Syz(\rho(\mathcal{G}))\cap\mathbb{K}[x_1,\ldots,x_K]_{r+2})={m+1\choose 2}{n\choose r+2}$$ by Theorem~\ref{thm:main}.
\end{proof}

\begin{corollary}
Assume that the entries of $M_{\bf x}$ are generic and let $b$ denote the degree in $x$ of the equations that we consider. Then the SupportMinors Algorithm outputs a solution to MinRank in degree $b=1$ provided that $$m{n\choose r+1}\geq K{n\choose r}-1.$$
If the SupportMinors Algorithm does not output a solution to MinRank in degree $b=1$, then it outputs one in degree $b=2$ provided that 
$$Km{n\choose r+1}-{m+1\choose 2}{n\choose r+2}\geq {K+1\choose 2}{n\choose r}-1.$$
\end{corollary}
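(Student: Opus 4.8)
The plan is to apply Theorem~\ref{thm:cpxty} directly: the SupportMinors Algorithm succeeds at degree $b$ precisely when the number of linearly independent equations available for linearization at that degree is at least one less than the dimension of the ambient monomial space $T_{(b,1)}$, since setting one Pl\"ucker coordinate to $1$ removes one degree of freedom and the system then determines the remaining unknowns by Gaussian elimination. So the proof reduces to comparing the two quantities computed in Theorem~\ref{thm:cpxty} with the corresponding values $\dim(T_{(1,1)})=K\binom{n}{r}$ and $\dim(T_{(2,1)})=\binom{K+1}{2}\binom{n}{r}$.

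For $b=1$, by Theorem~\ref{thm:cpxty} the number of linearly independent equations is $\min\{m\binom{n}{r+1},K\binom{n}{r}\}$. This equals $\dim(T_{(1,1)})-?$ and is $\geq K\binom{n}{r}-1$ exactly when $m\binom{n}{r+1}\geq K\binom{n}{r}-1$; in that case the minimum is either $K\binom{n}{r}$ (the full space, so the system has a unique projective solution after dehomogenizing) or $K\binom{n}{r}-1$ (one dimension short, which still pins down the solution once a Pl\"ucker coordinate is normalized). Either way the algorithm outputs a solution, which is the first claim. For $b=2$, under the hypothesis $m\binom{n}{r+1}\leq K\binom{n}{r}$ which is exactly the case "the algorithm did not already finish at $b=1$" in the relevant regime, Theorem~\ref{thm:cpxty} gives that the number of independent equations is $\min\{Km\binom{n}{r+1}-\binom{m+1}{2}\binom{n}{r+2},\binom{K+1}{2}\binom{n}{r}\}$, and comparing this with $\dim(T_{(2,1)})=\binom{K+1}{2}\binom{n}{r}$ shows the rank deficiency is at most $1$ precisely when $Km\binom{n}{r+1}-\binom{m+1}{2}\binom{n}{r+2}\geq\binom{K+1}{2}\binom{n}{r}-1$, giving the second claim.

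I would spell out briefly why "rank equal to ambient dimension minus one" suffices for the algorithm to terminate: after fixing one $C_{[r],J}=1$, the solution set of the linearized system is an affine subspace of dimension equal to (ambient dimension) $-$ (rank) $-1 \leq 0$, hence a single point (or empty, which does not occur for a genuine MinRank instance with a solution), and this point, once lifted back, yields the sought $x_1,\dots,x_K$. This is the same reasoning already implicit in the discussion preceding Theorem~\ref{thm:cpxty} in Section~\ref{sect:estimates}, so it does not need to be reproved from scratch.

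The main obstacle is really just bookkeeping: making precise the meaning of "outputs a solution" and checking that the off-by-one in the inequalities ($\geq K\binom{n}{r}-1$ rather than $\geq K\binom{n}{r}$) is exactly accounted for by the single normalization $C_{[r],J}=1$, and that the hypothesis $m\binom{n}{r+1}\leq K\binom{n}{r}$ needed to invoke the $b=2$ part of Theorem~\ref{thm:cpxty} is automatically satisfied in the situation of the corollary (the algorithm did not terminate at $b=1$, so $m\binom{n}{r+1}<K\binom{n}{r}-1$, which certainly implies $m\binom{n}{r+1}\leq K\binom{n}{r}$). No genuinely new mathematical input is required beyond Theorem~\ref{thm:cpxty}.
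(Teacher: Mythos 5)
Your proof is correct and is exactly the argument the paper leaves implicit: the corollary is stated without proof because it follows directly from Theorem~\ref{thm:cpxty} by comparing the rank of the Macaulay matrix with $\dim(T_{(b,1)})-1$, the threshold for linearization to succeed after the normalization $C_{[r],J}=1$. Your bookkeeping is right in both cases, and your observation that failure at $b=1$ (i.e.\ $m\binom{n}{r+1}<K\binom{n}{r}-1$) automatically gives the hypothesis $m\binom{n}{r+1}\le K\binom{n}{r}$ needed to invoke the $b=2$ part of Theorem~\ref{thm:cpxty} is exactly the small check that makes the chained statement legitimate.
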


Notice that the case $b=2$ is of high interest, as this is the relevant degree in the attacks to ROLLO-I-256 and many instances of GeMSS, see~\cite[Sections 6.1 and 6.2]{BBCGPSTV20}.

\bibliographystyle{abbrv}
\bibliography{Bib_MinRank}
	
\end{document}